\numberwithin{equation}{section}
\newcounter{listcounter}
\newcounter{listcounter2}
\newcommand{\Z}{\mathbb Z}
\newcommand{\R}{\mathbb R}
\newcommand{\mcal}[1]{\mathcal{#1}}
\newsavebox{\pullback}
\sbox\pullback{%
\begin{tikzpicture}%
\draw (0,0) -- (1ex,0ex);%
\draw (1ex,0ex) -- (1ex,1ex);%
\end{tikzpicture}}
\DeclareMathOperator{\Hom}{Hom}
\newtheorem{lemma}[equation]{Lemma}
\newtheorem{corollary}[equation]{Corollary}
\newtheorem{proposition}[equation]{Proposition}
\newtheorem*{lemma*}{Lemma}
\theoremstyle{definition}
\newtheorem{example}[equation]{Example}
\theoremstyle{remark}
\crefname{thm}{Theorem}{Theorems}
\crefname{lem}{Lemma}{Lemmas}
\crefname{cor}{Corollary}{Corollaries}
\crefname{prop}{Proposition}{Propositions}
\crefname{ex}{Exercise}{Exercises}
\crefname{exm}{Example}{Examples}
\crefname{defn}{Definition}{Definitions}
\crefname{claim}{Claim}{Claims}
\crefname{rem}{Remark}{Remarks}
\crefname{fct}{Fact}{Facts}
\crefname{note}{Note}{Notes}
\DeclarePairedDelimiter\paren{(}{)}
	\let\oldparen\paren
	\def\paren{\@ifstar{\oldparen}{\oldparen*}}
\newcommand{\id}{\mathrm{id}}
\newcommand{\Spin}{\textrm{spin}}
\newcommand{\cO}{\mathcal{O}}
\newcommand{\free}{\mathrm{free}}
\newcommand{\tor}{\mathrm{tor}}
\newcommand{\mbbC}{\mathbb{C}}
\newcommand{\mbbH}{\mathbb{H}}
\newcommand{\mbbQ}{\mathbb{Q}}
\newcommand{\mbbR}{\mathbb{R}}
\newcommand{\mbbZ}{\mathbb{Z}}
\newcommand{\Spinc}{\Spin^c}
\newcommand{\HNtor}{H_1(N)_\tor}
\newcommand{\QmodZ}{\mbbQ/\mbbZ}
\newcommand{\Poincare}{Poincar\'e\ }
\newcommand{\Ann}{\mathrm{Ann}}
\newcommand{\lk}{\mathrm{lk}}
\newcommand{\CS}{\mathrm{CS}}
\newcommand{\tc}{\mathrm{tc}}
\newcommand{\dvol}{\mathrm{dvol}}
\newcommand{\cH}{\mathcal{H}}
\newcommand{\LambdaR}{\Lambda_\R}
\newcommand{\LambdaRplus}{\LambdaR^+}
\newcommand{\LambdaRminus}{\LambdaR^-}
\newcommand{\diag}{\mathrm{diag}}
\newcommand{\TO}{\mathrm{TO}}
\newcommand{\cD}{\mcal{D}}
\newcommand{\cA}{\mcal{A}}
\newcommand{\cB}{\mcal{B}}
\newcommand{\pitchforkrel}{\pitchfork^{\mathrm{rel}}}
\newcommand{\cK}{\mcal{K}}
\title{From bordisms of three-manifolds to domain walls between topological orders}
\author{Yu Leon Liu}
\address{Department of Mathematics, Harvard University, 1 Oxford St, Cambridge, MA 02139}
\email{yuleonliu@math.harvard.edu}
\author{Dalton A R Sakthivadivel}
\address{Department of Mathematics, CUNY Graduate Centre, 365 Fifth Avenue, New York, NY 10016}
\email{dsakthivadivel@gc.cuny.edu}
\subjclass[2020]{57K10, 57R56, 57R65, 81V27}
\begin{document}
\maketitle

\begin{abstract}
    We study a correspondence between spin three-manifolds and bosonic abelian topological orders. Let $N$ be a spin three-manifold. We can define a $(2+1)$-dimensional  topological order $\TO_N$ as follows: its anyons are the torsion elements in $H_1(N)$, the braiding of anyons is given by the linking form, and their topological spins are given by the quadratic refinement of the linking form obtained from the spin structure. Under this correspondence, a surgery presentation of $N$ gives rise to a classical Chern--Simons description of the associated topological order $\TO_N$. We then extend the correspondence to spin bordisms between three-manifolds, and domain walls between topological orders. In particular, we construct a domain wall $\cD_M$ between $\TO_N$ and $\TO_{N'}$, where $M$ is a spin bordism from $N$ to $N'$. This domain wall unfolds to a composition of a gapped boundary, obtained from anyon condensation, and a gapless Narain boundary CFT. 
\end{abstract}

\tableofcontents

\section{Introduction}\label{sec:intr}
Topological orders are gapped phases of matter that exhibit exotic topological properties \cite{wen1990topological}.\footnote{A quantum theory is \emph{gapped} if the gap above the groundstate in the spectrum of the Hamiltonian persists in the themodynamic limit. Otherwise it is \emph{gapless}.} In $2+1$ dimension, these systems host quasi-particle excitations, called anyons. Anyons have remarkable algebro-topological properties which we will now describe. 

The fusion of anyons defines a commutative product
\begin{equation}
    a \times b = \sum_c N^c_{ab} \, c,
\end{equation}
with $N^c_{ab} = N^c_{ba}$. A topological order is called \emph{abelian} if the fusion product is group-like, i.e. $N^c_{ab}$ is zero for all but one $c$, for which $N^c_{ab} = 1$. In this case the set of anyons forms a finite abelian group under the fusion product $\times$. 
Additionally, anyons braid with each other. There are two types of braidings. The (full-)braiding $S(a,b)$ between two anyons $a$ and $b$ captures the Aharanov--Bohm phase created by winding one anyon around another in a full circle. There is also the half-braiding $\theta_a \in U(1)$ obtained by winding $a$ in a half-circle around another anyon with the same label $a$. The half-braiding $\theta_a$ is also called the topological spin, as it measures the spin-statistics of $a$.
For example, $a$ is a boson (fermion) if $\theta_a$ is $1$ ($-1$). However in general its value may be any root of unity. In the abelian case, the full-braiding is in fact determined by the half-braiding 
\begin{equation}\label{eq:S-from-theta}
    S(a,b) = \frac{\theta_{a+b}}{\theta_a \theta_b}.
\end{equation} 

In this paper we focus on bosonic abelian topological orders. From now on whenever we say topological order we mean a bosonic abelian topological order in particular. Following \cite{belov2005classification, stirling2008abelian, kapustin2011topological, lee2018study}, these topological orders 
are classified by 
the topological spins and the fusion product of their anyons.\footnote{The mathematical structure of general bosonic topological orders are captured by unitary modular tensor categories (MTCs) \cite{kitaev2006anyons}. See \cite{etingof2016tensor} for an introduction to MTCs. In this language, abelian topological orders correspond to \emph{pointed} unitary MTCs. 
See \cite[\S 2.3]{lee2018study} 
for the construction of MTCs from the $(D,q)$ data.} We can write these data additively as pairs $(D, q)$,\footnote{Some also requires a chiral central charge $c$ mod 24 that satisfies the Gauss--Milgram formula \cite{cano2014bulk}. We will not include such data.} where $D$ is a finite abelian group and $q$ is a quadratic form whose associated bilinear form 
\begin{equation}
    L(a,b) \coloneqq q(a+b) - q(a) - q(b)
\end{equation}
is non-degenerate.
Here $D$ is the group of anyons under multiplication by the fusion product; the full-braiding and the half-braiding are given by 
\begin{equation}
    S(a,b) = e^{2\pi i L(a,b)}, \quad \theta_a = e^{2 \pi i q(a)}.
\end{equation}

\begin{table}
    \centering
     \renewcommand{\arraystretch}{2}
       \begin{tabular}{c|c|c}
       Geometry    & Physics & Section\\
       \hline
       Spin three-manifold  
       & Abelian topological order  & \S\,\ref{sec:3-manifold-to-TO}\\ 
       Kirby diagram 
       & Toroidal Chern--Simons & 
       \S\,\ref{sec:Dehn-surgery-and-CS}\\ 
        Simply connected four-manifold
        & Narain boundary CFT 
        &
        \S\,\ref{sec:geometry-of-gapless-boundaries}\\
        Wall surgery & Anyon condensation
        & \S\,\ref{sec:geo-of-anyon-cond}\\
        Bordisms & Domain walls & \S\,\ref{sec:general-bordism}
    \end{tabular}
    \vspace{1em}
    \caption{}\label{tab:correspondence-table}
\end{table}
In this paper we associate topological orders to spin three-manifolds, and domain walls to spin bordisms. Note that unlike much of the literature on topological field theory, we are {\it not} assigning invariants from those theories to three-manifolds; rather, we are assigning a theory itself to the manifold. This establishes a correspondence between the geometry of three- and four-manifolds and the algebraic theory of topological orders. We summarize the correspondence in \cref{tab:correspondence-table}. To demonstrate the various aspects of this correspondence, we also provide two extended examples: $U(1)$ level $n$ Chern-Simons theory (Examples \ref{ex:U1-level-n-1}, \ref{ex:U1-level-n-2}, \ref{ex:U1-level-n-3}) and the toric code (Examples \ref{ex:toric-code-1}, \ref{ex:toric-code-2}, \ref{ex:toric-code-3}, \ref{ex:toric-code-4}).

In \cref{sec:3-manifold-to-TO} we  construct a topological order $\TO_N$ using a spin three-manifold $N$. 
The finite abelian group $D$ is $H_1(N)_\tor$, i.e. the torsion subgroup of $H_1(N)$.\footnote{$H_*$ with no coefficients written is to be understood as integral homology.}
The braiding comes from the symmetric non-degenerate linking form 
\begin{equation}
    L \colon \HNtor \otimes \HNtor \to \QmodZ.
\end{equation}
Furthermore, by \cite{lannes1974signature, massuyeau2003spin} the spin structure on $N$ induces a quadratic refinement $q$ of the linking form.\footnote{By Wu's formula, $w_2 = \mathrm{Sq}^1(w_1)$, so any oriented three-manifold is spinnable.} The pair $(\HNtor, q)$ defines a topological order $\TO_N$.\footnote{We give a physical interpretation to this assignment in an upcoming paper.}

Interestingly, surgery constructions of three-manifolds give classical descriptions of a topological order in the form of toroidal Chern--Simons theories. On one hand, surgeries are the standard way to construct three-manifold. A well-known result of Lickorish \cite{lickorish1962representation} and Wallace \cite{wallace1960modifications} is that every closed, connected, orientable three-manifold arises from performing integral Dehn surgery on a link in the three-sphere. 
Furthermore, Kirby diagrams with entirely even framing numbers give rise to spin three-manifolds.
On the other hand, toroidal (i.e. $U(1)^n$) Chern--Simons theories \cite{belov2005classification} are the standard way to construct abelian topological orders. They take the input of a non-degenerate even symmetric bilinear form, which is called the $K$-matrix.

After establishing necessary results about surgery in \cref{sec:surgeries-3-manifold}, in \cref{sec:Dehn-surgery-and-CS} 
we assign a toroidal Chern-Simons theory $\CS_{\lk}$ to an even Kirby diagram, where the $K$-matrix $\lk$ is obtained from the linking numbers between pairs of link components.
Furthermore, the quantum theory of $\CS_{\lk}$ is the topological order $\TO_N$, giving us a commutative diagram:
\begin{equation}\label{eq:Kirby-calculus-and-CS}
    \begin{tikzcd}
        \textrm{Even Kirby diagrams} \ar[rrr, "\textrm{Perform surgery}"] \ar[d, "\CS_{\lk}", swap] &&& \textrm{Spin three-manifolds} 
        \ar[d, "\TO_N"]
        \\ 
        \textrm{Toroidal Chern--Simons theories} \ar[rrr, "\textrm{Quantum theory}"] &&& 
        \textrm{Abelian topological order}
    \end{tikzcd}
\end{equation}
This presents a nice parallel between constructing three-manifolds from Kirby diagrams, and quantizing classical Chern--Simons Lagrangians.

Now we move onto assigning a 
domain wall $\cD_M$, a two-dimensional interface between two topological orders $\TO_N$ and $\TO_{N'}$, to 
a spin bordism $M$ from $N$ to $N'$. Using the folding trick (\cref{subsec:folding-trick}), it suffices to consider the case where $N'$ is empty, i.e. $M$ is a spin four-manifold bounding $N$, and $\cD_M$ is a boundary of $\TO_N$. Note that this is somewhat counterintuitive, as $M$ is a four-manifold whose boundary is $N$ whereas $\cD_M$ is a two-dimesional boundary of $\TO_N$. This comes from the fact that we are compactifying a $6D$ theory on $M$ and $N$ (see \cref{sec:outlook}).

Let us review the structure of boundaries for a topological order $(D,q)$ \cite{ji2019noninvertible, lin2023asymptotic, lin2023bootstrapping}. Given an ordinary $2D$ theory $\cB$, we can view it as a boundary of the trivial topological order $(D = 0, 0)$. 
Its partition function $Z_{\cB}(\tau)$, i.e. the value of $\cB$ on a torus, 
is a function of $\tau \in \mbbH/SL(2,\Z)$.\footnote{In general $Z_\cB(\tau)$ is not holomorphic in $\tau$.} 
In particular, $Z_{\cB}$ is invariant under modular transformations:
\begin{equation}
    \begin{aligned}
        Z_{\cB}(\tau + 1) = Z_{\cB}, \\
        Z_{\cB}(-1/\tau) = Z_{\cB}.
    \end{aligned}
\end{equation}

Now suppose $\cB$ is a boundary theory for a general topological order $(D,q)$. 
Note that $\cB$ is allowed to have different chiral central charges $(c_+, c_-)$.
In constrast to the above, we have a {\it twisted} partition functions $Z_{\cB}^a(\tau)$, labeled by an anyon $a \in D$, obtained from evaluating $\cB$ on a solid torus with an $a$ anyon line inserted through the center.
These twisted partition functions satisfy the modular \emph{covariant} condition:
\begin{equation}\label{eq:modular-covariance}
\begin{aligned}
   Z_{\cB}^a(\tau+1) &= e^{-2\pi i\frac{( c_+ - c_-)}{24}}\,e^{2\pi i q(a)}\, Z_{\cB}^a(\tau),\\
   Z_{\cB}^a(-1/\tau) &= \frac{1}{\sqrt{| D|}}\,
\sum_{b \in D}\, e^{2\pi i L(a,b)}\, Z_{\cB}^b(\tau).
\end{aligned}
\end{equation}

In this paper we study both gapless and gapped boundary theories of topological orders.
In the gapless case, we consider Narain boundary CFTs \cite{MR4317226, GeneralizedNarain}. These boundary CFTs generalize the standard Narain CFTs, toroidal compactifications of heteroic string theory \cite{narain1989new}. 
Given a simply connected spin Riemannian four-manifold $M$ with boundary $N$, in \cref{sec:geometry-of-gapless-boundaries}  we associate a Narain boundary CFT $\cB_M$ to $\TO_N$. 
These boundary CFTs require the data of a polarization of $H^2(M, \R)$ (\cref{subsec:Narain CFT}), which comes from the Hodge isomorphism (\cref{subsec:Hodge-theory}). 

The gapped case is studied in \cref{sec:geo-of-anyon-cond}.
Gapped boundaries and domain walls of topological orders are governed by the theory of anyon condensation \cite{kong2014anyon}, which we review in \cref{subsec:anyon-cond-and-gapped-boundaries}.
In anyon condensation, we create a condensed phase by identifying a set of mutually commuting bosons with the vacuum and screening out those anyons that braid non-trivially with that set. 
In our setting, this translates to the question of how to kill homology classes\textemdash a central problem in surgery theory.
Taking a method for killing middle-dimensional homology classes described in \cite{wall1962killing}, we perform a surgery 
whose effect on homology is precisely the condensation of anyons.

Having considered the gapless and gapped case, in \cref{sec:general-bordism} we assign a boundary theory $\cD_M$ to $\TO_N$, where $M$ is any spin Riemannian four-manifold bounding $N$. This boundary is the composite of a gapped boundary from anyon condensation with a Narain boundary CFT, tieing together the two cases above. Lastly, using the folding trick, we assign domain walls $\cD_M \colon \TO_N \to \TO_{N'}$ to spin bordisms $M$ between $N$ and $N'$ in \cref{subsec:folding-trick}.

One can contrast our construction with similar literature, for instance \cite{gukov2017fivebranes}, where they assign Narain boundaries and topological orders to bordisms between spin three-manifolds using similar data. Importantly, they focus on the case where the linking form is positive-definite and include non-abelian generalizations of topological orders, and do not discuss gapped boundaries\textemdash whereas this work extends to general bordisms with indefinite linking forms, and introduces a surgery procedure which condenses the anyons in the theory.

In \cref{sec:outlook} we discuss possible generalizations of our correspondence and put it in the context of 3D-3D correspondences.

\textbf{Acknowledgements.}
The authors are grateful to Salvatore Pace, Oliver Thaker, and Ryan Thorngren for helpful conversations. YLL would also like to thank Mike Hopkins for his guidance and encouragement. DARS thanks Dennis Sullivan for insightful comments.

YLL gratefully acknowledges the financial support provided by the Simons Collaboration on Global Categorical Symmetries. DARS acknowledges support from the Einstein Chair programme at the Graduate Centre of the City University of New York.

\section{From three-manifolds to topological orders}\label{sec:3-manifold-to-TO}

In this section we construct topological orders from spin three-manifolds. 
We will start with oriented manifolds. Let $N$ be a closed, connected, oriented three-manifold and $\HNtor$ be the torsion subgroup of $H_1(N)$. Then \Poincare duality gives a non-degenerate symmetric bilinear form $L \colon \HNtor \otimes \HNtor \to \QmodZ$ called the linking form.\footnote{First defined by Seifert. See \cite{van1938invariants, wall1963quadratic, wall1967classification}. For a geometric picture of how the linking form is obtained generally from the intersection of submanifolds representing torsion cycles, see \cite[\S 5]{dennis}.} It can be defined as follows: 
given $a,b \in \HNtor$ and two knots $\cK, \cK'$ representing them. Note that all knots in this paper are oriented. Since $ra = 0$ in $H_1(N)$ for some $r$, there exists a Seifert surface $C$ which bounds $r\cK$, i.e. $\partial C = r\cK$. Now perturb $C$ and $\cK'$ so that they intersect transversally. Then the linking form is given by
\begin{equation}\label{eq:L-N}
 L(a,b) \coloneqq \frac{C \pitchfork \cK'}{r} \mod \Z.   
\end{equation}

Using the oriented structure on $N$, we have obtained a finite abelian group $\HNtor$ and a non-degenerate pairing $L$. This is sufficient to define the braiding of anyons but not the topological spin, which needs a quadratic refinement of the linking form.  Recall that a quadratic form refining $L$ is a map $q \colon \HNtor \to \QmodZ$
such that 
\begin{equation}
  q(na) = n^2 q(a), \quad q(a+b) - q(a) - q(b) = L(a,b).
\end{equation}
We follow the construction in \cite{massuyeau2003spin}:
fix a spin structure $\sigma$ on $N$ and some $a \in D$. Let $\cK$ be a knot in $N$ representing $a$, and $\nu$ a tubular neighborhood of $\cK$. Recall that a parallel $\ell$ of $\cK$ is a knot living on $\partial \nu$ whose homology class is $a$, i.e., the natural map $H_1(N - \nu) \to H_1(N)$ takes $[\ell]$ to $a$. Furthermore, choosing an isotopy class of a parallel is equivalent to choosing a framing $\nu = D^2 \times S^1$ of the normal bundle of $\cK$, a fact that we will use repeatedly.

Now, using the framing associated to $\ell$, the spin structure $\sigma$ on $N$ induces a spin structure $\sigma_\ell$ on $\cK$.
Recall that there are two spin structures on $\cK = S^1$: 
the \emph{bounding} spin structure which bounds the disk $D^2$, and the \emph{non-bounding} spin structure which does not. One can use this fact to define a $\Z_2$-valued invariant on the set of isotopy classes of parallels:
\begin{equation}\label{bounding-non-bounding-eq}
    \sigma(\cK, \ell) = 
    \begin{cases}
$0$ & \text{$\sigma_\ell$ is bounding} \\
$1$ & \text{$\sigma_\ell$ is non-bounding}
    \end{cases}
\end{equation}
Note that isotopy classes of parallels form a non-empty $\Z$-torsor, where $\ell + k$ is a parallel obtained by twisting $\ell$ around the meridian $k$ times. We will denote such a twist by $\ell + km$, with $m$ being the meridian.
By \cite[Lemma 10]{massuyeau2003spin}, we have
\begin{equation}\label{eq:need-this-1}
  \sigma(\cK, \ell + km) \equiv \sigma(\cK, \ell) + k \mod 2.
\end{equation}

Now we define the quadratic form $q(a)$. Suppose that $ra = 0$, and consequently that $r\cK$ is null-homologous; then there exists a Seifert surface $C$ whose boundary is $r\cK$. Choose a parallel of $\cK$ transverse to $C$ such that $\sigma(\cK, \ell) = 0$. Then we have
\begin{equation}\label{eq:definition-of-q}
    q(a) \coloneqq \frac{C \pitchfork \ell}{2r}. 
\end{equation}
Since ${C \pitchfork m} = r$, one can use \eqref{eq:need-this-1} to show that $q$ is well-defined. It is straightforward to check that $q$ is a quadratic form refining the linking form $L$.\footnote{
 Recall that the set of spin structures on $N$ form a torsor over $H_1(N, \Z_2)$. Given a spin structure $\sigma$ together with $s \in H_1(N, \Z_2)$, it is clear from the construction that the quadratic refinement $q'$ associated to $\sigma + s$ is of the form 
    \begin{equation}
        q'(a) = q(a) + s(a).
    \end{equation}
}
We have obtained a finite abelian group $D = \HNtor$, together with a quadratic form $q$, from the spin three-manifold $N$. 
These data allow us to define a topological order $\TO_N \coloneqq (\HNtor, q)$.

We end this section with two examples. Note in the following example $L(n,k)$ indicates a lens space\textemdash not to be confused with the linking form $L(a, b)$.
\begin{example}[Lens spaces part I]\label{ex:U1-level-n-1}
   Fix a positive integer $n$ and an integer $k$ coprime to $n$. 
    There is a free action of $\Z_n$ on $S^3$ seen as the unit sphere of $\mbbC^2$. The action of the generator is 
    \begin{equation}
        (z_1, z_2) \mapsto (e^{2\pi i/n}z_1, e^{2\pi i k/n} z_2).
    \end{equation}
    The lens space is the quotient $L(n,k) \coloneqq S^3/(\Z_n)$, and we have 
    \begin{equation}\label{eq:U1_n-D}
        H_1(L(n,k)) = \Z_n, \quad L(a,b) = \frac{k\,a\,b}{n} \mod \Z.
    \end{equation}
    Furthermore, the unique spin structure on $S^3$ descends to a spin structure on $L(n,k)$. 
    When $n$ is odd, the quadratic refinement is 
    \begin{equation}
        q(a) =  \frac{(k/2) \,a^2}{n} \mod \Z
    \end{equation}
    where $k/2$ is taken to be the unique element in $\Z_n$ with $2 \, (k/2) = k \mod n$.
    On the other hand, when $n$ is even, the associated quadratic refinement is 
    \begin{equation}\label{eq:U1_q}
        q(a) \coloneqq \frac{k\,a^2}{2n} \mod \Z.
    \end{equation}
    \eqref{eq:U1_n-D} and \eqref{eq:U1_q} define a topological order for any $L(n,k)$. 
    We will relate $L(n,1)$ to $U(1)$ level $n$ Chern--Simons theories in \cref{ex:U1-level-n-2}.
\end{example}

\begin{example}[Toric code part I]\label{ex:toric-code-1}
    The toric code \cite{kitaev2003fault}, i.e. the $\Z_2$ gauge theory \cite{dijkgraaf1990topological}, has anyon data 
    \begin{equation}
        D = \Z_2 \times \Z_2, \quad
        q(c) = \frac{a\, b}{2} \mod \Z
    \end{equation}
    where $c \coloneqq (a, b) \in \Z_2 \times \Z_2$. In particular, there are four anyons: three bosons $1 =(0,0)$, $e =(1,0)$, $m = (0,1)$, and a fermion $f =(1,1) = e \times m$.
    The braiding is given by
    \begin{equation}
        L((a_1, b_1), (a_2, b_2)) = 
        \frac{a_1\, b_2 + b_1\, a_2}{2}
        \mod \Z.
    \end{equation}
     We will construct a manifold whose topological order is the toric code in \cref{ex:toric-code-2}.
\end{example}

\section{Surgery on three-manifolds}\label{sec:surgeries-3-manifold}

\subsection{Surgery and spin structures}\label{subsec:surgery-and-spin}
Let $N$ be a three-manifold. The basic idea of (integral Dehn) surgery on three-manifolds is the following: we remove a framed embedding of the solid cylinder $D^2 \times S^1 \subset N$ and then glue back in $S^1 \times D^2$ to obtain a new oriented three-manifold:
\begin{equation}\label{eq:surgery-def}
    N' \coloneqq (N \setminus D^2 \times S^1) \bigcup_{S^1 \times S^1} (S^1 \times D^2).
\end{equation}
Note that the data of the embedded solid torus is 
equivalent to an oriented knot $\cK = \{0\} \times S^1$ accompanied by a choice of parallel $\ell = \{c\} \times S^1$, where $c$ is some point on the boundary of $D^2$. The manifold $N'$ obtained from surgery depends strictly on the isotopy class of $\ell$. 

Surgeries come with a canonical oriented bordism $M$, sometimes called the \emph{trace}, from $N$ to $N'$:
\begin{equation}\label{eq:trace-definition}
    M \coloneqq N \times I \bigcup_{\nu \times \{1\}} D^2 \times D^2.
\end{equation}
Geometrically,  we take the identity bordism $N \times I$, then glue on a two-handle $D^2 \times D^2$ by identifying the $D^2 \times S^1$ component of its boundary with $\nu \times \{1\}$.
More generally, we can perform surgery whenever we have a link $L$ together with a framing for each link component. This surgery comes with a canonical bordism from $N$ to the resulting manifold.

We are interested in surgeries that preserve spin structures. Equip $N$ with a spin structure $\sigma$ and $D^2 \times D^2$ with its unique spin structure.
We would like to determine if \eqref{eq:trace-definition} glues to a spin structure on $M$,\footnote{Note that a spin structure on $M$ induces a spin structure on $N'$.} which is equivalent to asking whether the spin structures on $N$ and $D^2 \times D^2$ restrict to the same spin structure on $\nu$.  
Since the spin structure on $D^2 \times D^2$ restricts to the trivial spin structure on $\nu$, this is in turn equivalent to whether the spin structure on $N$ restricts to the trivial spin structure on $\nu$, which is captured by $\sigma(\cK,\ell)$ defined in \eqref{bounding-non-bounding-eq}. To summarize, we have the following:
\begin{lemma}\label{lem:surgery-and-spin-structure}
    Let $\sigma$ be a spin structure on $N$. Given a knot $\cK$ in $N$ and a parallel $\ell$ of $\cK$, then \eqref{eq:surgery-def} defines a spin structure on $N'$ and \eqref{eq:trace-definition} defines a spin bordism from $N$ to $N'$ if and only if 
    \begin{equation}
        \sigma(\cK, \ell) = 0.
    \end{equation}
    More generally, given a link $L = \cK_1 \sqcup \cK_2 \sqcup \cdots \sqcup \cK_n$ in $N$ together with framings $\ell_i$ for each link component $\cK_i$, then performing $(L, \ell)$ surgery constructs a spin manifold $N'$ and a spin bordism $M \colon N \to N'$ if and only if 
    \begin{equation}
        \sigma(\cK_i, \ell_i) =0
    \end{equation}
    for every $i$.
\end{lemma}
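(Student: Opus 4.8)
The plan is to reduce the existence of the glued spin structure to a purely local matching condition on the attaching region, using the standard descent property of spin structures: if a manifold $X$ is written as $X_1 \cup_Y X_2$, where $Y$ is a codimension-zero piece of $\partial X_1$ and $\partial X_2$ along which the two are glued, then a spin structure on $X_1$ and a spin structure on $X_2$ assemble to a spin structure on $X$ restricting to the given ones precisely when their restrictions to $Y$ agree. (This is the assertion that spin structures, being a reduction/lift defined locally, glue over collar neighbourhoods of $Y$.) Applying this to the trace \eqref{eq:trace-definition} with $X_1 = N \times I$ carrying the pullback of $\sigma$, $X_2 = D^2 \times D^2$ carrying its unique spin structure, and $Y = \nu \times \{1\} \cong \nu$, the entire question collapses to whether the two structures restrict to the same spin structure on the solid torus $\nu = D^2 \times S^1$.

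First I would note that $\nu$ deformation retracts onto its core knot $\cK = \{0\} \times S^1$, so restriction identifies the two spin structures on $\nu$ with the bounding and non-bounding spin structures on $\cK \cong S^1$. On the handle side, the circle factor of the attaching region $D^2 \times S^1 \subset \partial(D^2 \times D^2)$ bounds the disk $\{0\} \times D^2$ inside $D^2 \times D^2$, so the unique spin structure on the handle restricts to the bounding spin structure on that circle, hence to the bounding spin structure on $\nu$. On the $N$ side, the framing determined by $\ell$ is exactly the product identification $\nu \cong D^2 \times S^1$ under which $\sigma$ induces $\sigma_\ell$ on $\cK$; the two restrictions therefore agree if and only if $\sigma_\ell$ is bounding, which by definition \eqref{bounding-non-bounding-eq} is precisely $\sigma(\cK, \ell) = 0$. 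For the converse implication one uses that a spin structure on $M$ necessarily restricts to the unique one on the handle and (since $N \times I$ retracts to $N$) to the pullback of $\sigma$ on $N \times I$, so matching on $\nu$ is forced. Finally, a spin structure on $M$ restricts to its top boundary, which makes $N'$ spin, completing the connected case.

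For the link version I would observe that the tubular neighbourhoods $\nu_1, \dots, \nu_n$ are disjoint, so the single handle is replaced by $\bigsqcup_i D^2 \times D^2$ attached along the disjoint $\nu_i \times \{1\}$; the descent criterion then splits into independent matching conditions on each $\nu_i$, each equivalent to $\sigma(\cK_i, \ell_i) = 0$ by the connected case. The main obstacle I anticipate is not any one computation but making the descent statement and the reduction ``$\nu$ behaves like $\cK$'' fully precise: one must check that the identification of the attaching data is the one induced by the framing $\ell$ (so that the circle bounding $\{0\} \times D^2$ in the handle really corresponds to $\cK$), and that the $\Z_2$-torsor of spin structures on $S^1$ matches the invariant \eqref{bounding-non-bounding-eq} compatibly with the twisting relation \eqref{eq:need-this-1}.
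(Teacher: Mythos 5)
Your argument is correct and is essentially the paper's own: the paper likewise reduces the question to whether the spin structures on $N$ and on the handle $D^2\times D^2$ restrict to the same spin structure on $\nu$, notes that the handle's unique spin structure restricts to the bounding one, and identifies the remaining condition with $\sigma(\cK,\ell)=0$. You simply spell out the descent step, the converse, and the multi-component case in more detail than the paper does.
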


\subsection{Surgeries and homology}
Now we discuss the effects of certain surgery moves on the first homology groups of three-manifolds.
Suppose we have an oriented three-manifold $N$ and a knot $\cK$ in $N$. Let $\nu$ be a tubular neighborhood of $\cK$ and $a$ be its homology class in $H_1(N)$. 
\begin{lemma}\label{lem:surgery-on-homology}
    {\cite[Lemma 1]{wall1962killing}}
    Let $m$ be the meridian class in $H_1(N-\nu)$. Then the canonical map $H_1(N- \nu) \to H_1(N)$ induces an isomorphism $H_1(N-\nu)/m \simeq H_1(N)$. Additionally, if $a$ is torsion in $H_1(N)$ then $m$ is torsion-free in $H_1(N-\nu)$.
\end{lemma}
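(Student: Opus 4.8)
\section*{Proof proposal}

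The plan is to derive both statements from the Mayer--Vietoris sequence attached to the decomposition $N = X \cup_{T^2} \nu$, where $X \coloneqq N \setminus \mathrm{int}(\nu)$ is the knot complement, $\nu \cong D^2 \times S^1$, and the gluing is along the boundary torus $T^2 = \partial X = \partial \nu$. On $T^2$ I distinguish the meridian $m = \partial D^2 \times \{\mathrm{pt}\}$, which bounds the disk $D^2 \times \{\mathrm{pt}\} \subset \nu$, and the parallel $\ell$, which represents $a$. The key input is the behaviour of these two curves under the inclusions $i \colon X \hookrightarrow N$ and $j \colon \nu \hookrightarrow N$: in $H_1(\nu) \cong \Z$ one has $m \mapsto 0$ and $\ell \mapsto$ the generator (the core class), while in $H_1(N)$ the parallel satisfies $i_*[\ell]_X = a$, since $\ell$ and the core $\cK$ cobound an annulus in $\nu$.

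For the isomorphism I would write down the reduced segment
\begin{equation}
H_1(T^2) \xrightarrow{\phi} H_1(X) \oplus H_1(\nu) \xrightarrow{\psi} H_1(N) \to \widetilde H_0(T^2) = 0,
\end{equation}
so that $\psi$ is surjective. Writing $\psi(x,y) = i_*(x) - j_*(y)$ and using that $j_*$ carries the generator of $H_1(\nu)$ to $a = i_*[\ell]_X$, surjectivity of $\psi$ immediately upgrades to surjectivity of $i_* \colon H_1(X) \to H_1(N)$. For the kernel, exactness gives $\ker \psi = \mathrm{im}\,\phi$; since $\phi(m) = ([m]_X, 0)$ and $\phi(\ell) = ([\ell]_X, 1)$, every element $(x,0) \in \mathrm{im}\,\phi$ must have vanishing $\nu$-component, forcing $x \in \Z \cdot [m]_X$. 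This identifies $\ker i_*$ with the cyclic subgroup generated by $m$ and yields $H_1(X)/\langle m\rangle \cong H_1(N)$.

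For the second statement the plan is to pass to $\Q$ coefficients and combine the rationalisation of the first part with the ``half-lives-half-dies'' principle. Since $X$ is a compact oriented $3$-manifold with $\partial X = T^2$, Poincar\'e--Lefschetz duality forces the image of $H_1(\partial X;\Q) \to H_1(X;\Q)$ to be exactly half-dimensional, hence one-dimensional; in particular at least one of $[m]_X, [\ell]_X$ is non-torsion. On the other hand, rationalising the first part gives $H_1(X;\Q)/\langle [m]\rangle_\Q \cong H_1(N;\Q)$, and since $a$ is torsion we have $i_*[\ell]_X = a = 0$ in $H_1(N;\Q)$, so $[\ell]_X$ lies in the rational span of $[m]_X$. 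Were $[m]_X$ torsion, then $[\ell]_X$ would be torsion too, collapsing the boundary image to dimension zero and contradicting the duality count. Hence $[m]_X$ is non-torsion, i.e.\ $m$ is torsion-free in $H_1(N-\nu)$.

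The routine parts are the Mayer--Vietoris bookkeeping and the identification of the inclusion maps on $m$ and $\ell$; the only genuine content is the second statement, where the main obstacle is to see that torsionness of $a$ is precisely what pins the one-dimensional boundary image onto $m$ rather than $\ell$. The half-lives-half-dies lemma is what makes this work, and I would be sure to invoke the orientability of $N$ (hence of $X$) explicitly, since that is exactly the hypothesis the duality argument consumes.
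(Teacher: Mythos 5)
Your proof is correct. Note, however, that the paper gives no argument of its own here\textemdash it simply cites Wall's Lemma~1\textemdash so the natural comparison is with Wall's proof, and your route is genuinely different. Wall (and the standard account) uses the long exact sequence of the pair $(N, N-\nu)$ together with excision and the Thom isomorphism: $H_*(N, N-\nu)\cong H_*(D^2\times S^1, \partial D^2\times S^1)$ is $\Z$ in degree $2$, generated by a fiber disk whose boundary is the meridian, so the segment $H_2(N)\to \Z \to H_1(N-\nu)\to H_1(N)\to 0$ gives the quotient description at once; torsion-freeness of $m$ then follows because the first map is intersection with $[\cK]=a$, which vanishes when $a$ is torsion (a torsion class in $H^2(N)$ pairs trivially with $H_2(N)$), forcing $\Z\to H_1(N-\nu)$ to be injective. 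You instead run Mayer--Vietoris for $N = X\cup_{T^2}\nu$ and get the second claim from the half-lives-half-dies rank count on $\partial X = T^2$ combined with the rationalised quotient isomorphism. Both arguments consume Poincar\'e duality and orientability in the end; Wall's is more economical and localises the duality input in the single intersection pairing $H_2(N)\otimes\Z\langle a\rangle\to\Z$, while yours makes the role of the peripheral torus explicit and identifies exactly which of $m,\ell$ carries the surviving rational class\textemdash a picture that is also useful later in the paper when parallels with torsion homology class are singled out (e.g.\ in \cref{subsec:wall-surgery}). Your bookkeeping (where $m$ and $\ell$ go under the two inclusions, and the kernel computation from $\mathrm{im}\,\phi$) is all accurate.
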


Suppose $a$ is torsion. By \cref{lem:surgery-on-homology}
we have an extension:
\begin{equation}\label{eq:central-extension-helper}
    \Z \to H_1(N-\nu) \to H_1(N),
\end{equation}
where the generator of $\Z$ is the meridian $m$.
Let $H'_1(N-\nu)$ be the pullback 
$\HNtor \times_{H_1(N)} H_1(N-\nu)$. The central extension \eqref{eq:central-extension} pulls back to a central extension
\begin{equation}\label{eq:central-extension}
    \Z \to H'_1(N-\nu) \to \HNtor.
\end{equation}

\begin{proposition}
    Let $A$ be a torsion abelian group. The set of isomorphism classes of central extensions of $A$ by $\Z$ is equivalent to the set of homomorphisms $A \to \mbbQ/\Z$.
\end{proposition}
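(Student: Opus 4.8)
The plan is to identify the set of isomorphism classes of central extensions of $A$ by $\Z$ with the group $\Ext^1_\Z(A,\Z)$, and then to compute that Ext group and find it equal to $\Hom(A,\QmodZ)$.

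First I would observe that, because $A$ is torsion, every central extension
\[
0 \to \Z \to E \xrightarrow{\pi} A \to 0
\]
is automatically \emph{abelian}. Indeed, since the copy of $\Z$ is central and $A$ is abelian, the commutator $[x,y] = xyx^{-1}y^{-1}$ of any two lifts lands in the central $\Z$ and depends only on $\pi(x),\pi(y) \in A$; this yields a bi-additive pairing $A \times A \to \Z$. As $A$ is torsion and $\Z$ is torsion-free, any additive map out of $A$ vanishes, so the pairing is identically zero. Hence all commutators are trivial and $E$ is abelian. This reduces the problem to classifying extensions of abelian groups, which is precisely $\Ext^1_\Z(A,\Z)$.

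Next I would compute $\Ext^1_\Z(A,\Z)$ using the short exact sequence of abelian groups
\[
0 \to \Z \to \Q \to \QmodZ \to 0,
\]
in which $\Q$ is injective (being divisible). Applying $\Hom(A,-)$ produces a long exact sequence; since $A$ is torsion we have $\Hom(A,\Q)=0$, and since $\Q$ is injective we have $\Ext^1(A,\Q)=0$. The connecting homomorphism therefore induces an isomorphism $\Ext^1_\Z(A,\Z) \cong \Hom(A,\QmodZ)$, which combined with the previous step proves the proposition. For concreteness I would also record the explicit bijection: given an extension as above and $a\in A$ with $na=0$, any lift $\tilde a\in E$ satisfies $n\tilde a = k$ for some $k$ in the central $\Z$, and the associated homomorphism sends $a \mapsto k/n \bmod \Z$; conversely a homomorphism $\phi\colon A \to \QmodZ$ is realized by pulling back $0 \to \Z \to \Q \to \QmodZ \to 0$ along $\phi$.

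The main obstacle is the first step. A priori a central extension of an abelian group by $\Z$ need not be abelian, so one must argue — via the commutator pairing and the torsionness of $A$ — that no such nonabelian extensions exist. Once this is established, the identification with $\Ext^1$ and its computation are routine homological algebra. (Equivalently, one could phrase the whole argument through group cohomology: central extensions are classified by $H^2(A;\Z) \cong \Hom(H_2(A),\Z)\oplus\Ext^1(A,\Z)$, and the first summand vanishes because $H_2(A)\cong\Lambda^2 A$ is torsion; but the commutator argument is more elementary and self-contained.)
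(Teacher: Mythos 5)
Your proof is correct, but it leads with the argument the paper relegates to a footnote and treats the paper's main argument as an afterthought. The paper's primary proof is the explicit two-way construction: pulling back the universal extension $\Z \to \mbbQ \to \QmodZ$ along $f$ in one direction, and extracting $f(a) = k/r \bmod \Z$ from $r\tilde a = km$ in the other, with the verification left as "one can readily check"; the identification with $\Ext^1(A;\Z)$ via the long exact sequence for $\Z \to \mbbQ \to \QmodZ$ appears only in a footnote. Your version inverts this emphasis, and in doing so supplies a step that the paper glosses over in both places: a central extension of an abelian group by $\Z$ need not be abelian a priori (the relevant classifying object is group cohomology $H^2(A;\Z)$, not $\Ext^1(A;\Z)$), so the reduction to $\Ext^1$ genuinely requires your observation that the commutator pairing $A \times A \to \Z$ is bi-additive and hence vanishes when $A$ is torsion and the target is torsion-free. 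The same vanishing is what makes the paper's explicit map $a \mapsto k/r$ additive, since $(\tilde a \tilde b)^{rs}$ differs from $\tilde a^{rs}\tilde b^{rs}$ by a power of the central commutator. So your route is slightly longer but more airtight; the paper's is more concrete and hands the reader the explicit bijection that is actually used later (e.g.\ in \cref{cor:Wall-prop}), which you also record at the end.
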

\begin{proof}
    There is a universal central extension 
\begin{equation}\label{eq:uni-extension}
     \Z \to \mbbQ \to \QmodZ.
\end{equation}
Given any homomorphism $f \colon A \to \Z$, we have a central extension $\hat{A} \coloneqq \mbbQ \times_{\mbbQ/\Z} A$ obtained by pulling back along $f$. Conversely, given a central extension $\hat{A}$ of $A$ by $\Z$, we can construct a map $f \colon A \to \QmodZ$ as follows. As $A$ is torsion, for every $a \in A$ we have $ra = 0$ for some $r$. Let $\tilde{a}$ be a lift of $a$ in $\hat{A}$. Then $r\tilde{a} = km$ for some $k$, where $m \in \hat{A}$ denotes the generator of the $\Z$ central subgroup. 
We now define $f(a) \coloneqq k/r \mod \Z$. One can readily check that this is independent of the lift, and that $f \longleftrightarrow \hat{A}$ are inverse constructions.\footnote{Let us also give a homological argument. Recall that $\mathrm{Ext}^1(A; \Z)$ classifies central extensions of $A$ by $\Z$. We claim that when $A$ is torsion, we have an isomorphism $\mathrm{Ext}^1(A; \Z) \simeq \Hom(A; \QmodZ)$. This follows from  applying $\mathrm{Ext}^*(A; -)$ to the short exact sequence \eqref{eq:uni-extension}.
}
\end{proof}
In \cite[Lemma 3]{wall1962killing},
Wall determined the map $\HNtor \to \QmodZ$ corresponding to \eqref{eq:central-extension}:
\begin{lemma}\label{lem:central-ext-classify}
    The central extension \eqref{eq:central-extension} is classified by the group homomorphism $L(a,-) \colon \HNtor \to \QmodZ$.
\end{lemma}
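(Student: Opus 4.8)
The plan is to make the abstract classification from the preceding Proposition completely explicit and then match the resulting homomorphism against the geometric definition \eqref{eq:L-N} of the linking form. Recall that the Proposition attaches to the extension \eqref{eq:central-extension} the homomorphism $f \colon \HNtor \to \QmodZ$ defined as follows: lift $b \in \HNtor$ to some $\tilde b \in H'_1(N-\nu)$, write $s\tilde b = k\,m$ where $s$ is the order of $b$ and $m$ generates the central $\Z$, and set $f(b) = k/s \bmod \Z$. So the entire content of \cref{lem:central-ext-classify} is the identity $f(b) = L(a,b)$ for every $b$.

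First I would produce a \emph{geometric} lift of $b$. Representing $b$ by a knot $\cK'$ and isotoping it off the tubular neighborhood $\nu$, the class $[\cK'] \in H_1(N-\nu)$ maps to $b \in H_1(N)$, so $\tilde b \coloneqq (b, [\cK'])$ is a lift in the pullback $H'_1(N-\nu) = \HNtor \times_{H_1(N)} H_1(N-\nu)$; the central $\Z$ is generated by the class of a meridian of $\cK$. Because $a$ is torsion, \cref{lem:surgery-on-homology} guarantees that $m$ has infinite order, so the coefficient $k$ appearing in $s[\cK'] = k\,m$ is unambiguous.

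The key step is to compute $k$ geometrically. Since $sb = 0$, choose a Seifert surface $C'$ with $\partial C' = s\cK'$, made transverse to $\cK$ (equivalently, to the core of $\nu$). Intersecting with the complement, $C' \cap (N-\nu)$ is a surface whose boundary consists of $s\cK'$ together with one small meridional circle on $\partial\nu$ for each point of $C' \cap \cK$, each such circle carrying the sign of the corresponding intersection. Hence in $H_1(N-\nu)$ one reads off $s[\cK'] = (C' \pitchfork \cK)\,m$, i.e. $k = C' \pitchfork \cK$. Therefore $f(b) = (C' \pitchfork \cK)/s \bmod \Z$, which is exactly $L(b,a)$ by \eqref{eq:L-N}; symmetry of the linking form then gives $f(b) = L(a,b)$, as claimed.

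The main obstacle is the orientation bookkeeping in this key step: I must fix the orientation of the meridian so that it agrees with the chosen generator of the central $\Z$ in \eqref{eq:central-extension}, and then verify that, with this convention, each transverse intersection point of $C'$ with $\cK$ contributes a meridian with the correct sign, so that the signed total is $C' \pitchfork \cK$ rather than its negative. Getting this sign right is precisely what pins down $f(b) = +L(a,b)$. Everything else is routine: independence of $f(b)$ from the choices of $\cK'$ and of the Seifert surface $C'$ is already subsumed in the well-definedness of $L$ and in the well-definedness of the classifying map established in the Proposition.
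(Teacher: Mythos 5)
Your argument is correct, but note that the paper does not actually prove this lemma: it is quoted from \cite[Lemma 3]{wall1962killing}, so there is no in-text proof to compare against. What you have written is essentially the standard (and, in its three-dimensional incarnation, Wall's own) argument: make the classifying map of the Proposition explicit via $s\tilde b = km$, then compute $k$ by capping $s\cK'$ with a Seifert surface and counting the meridional boundary circles it acquires upon removing $\nu$, which identifies $k/s$ with $L(b,a)=L(a,b)$. All the ingredients you invoke are available in the paper (\cref{lem:surgery-on-homology} for the infinite order of $m$, the symmetry of $L$, well-definedness of the classifying map from the Proposition). The one delicate point you correctly flag is the orientation convention: with the meridian oriented to link $\cK$ positively, each positive intersection point of $C'$ with $\cK$ contributes the class $-m$ to the boundary of the punctured surface $C'\cap(N-\nu)$, so that $s[\cK'] = +(C'\pitchfork\cK)\,m$ and the sign comes out as $+L(a,b)$; spelling out that one sentence would make the proof complete.
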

We get the following two useful corollaries:
\begin{corollary}\label{cor:Wall-prop}
     Suppose that the homology class of $\cK$ in $H_1(N)$ is torsion. There is an injection $H'_1(N - \nu) \hookrightarrow \HNtor \times \mbbQ$, whose image can be identified with $(b, k)$ such that $L(a,b) = k \mod \Z$.
\end{corollary}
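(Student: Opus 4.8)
The plan is to deduce the statement directly from the classification of central extensions established just above, with essentially no new work. By \cref{lem:central-ext-classify}, the central extension \eqref{eq:central-extension} is classified by the homomorphism $L(a,-)\colon \HNtor \to \QmodZ$. The Proposition above then identifies the extension attached to a homomorphism $f \colon A \to \QmodZ$ with the pullback of the universal extension $\Z \to \mbbQ \to \QmodZ$ along $f$. Taking $A = \HNtor$ and $f = L(a,-)$, I would first produce an isomorphism of central extensions
\[
  H'_1(N-\nu) \isoto \mbbQ \times_{\QmodZ} \HNtor,
\]
under which the meridian $m$, the generator of the central $\Z$, corresponds to the standard generator $1 \in \mbbQ$.

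Next I would simply unwind the pullback. By definition $\mbbQ \times_{\QmodZ} \HNtor$ is the subgroup of $\mbbQ \times \HNtor$ of pairs $(k,b)$ whose images in $\QmodZ$ coincide, i.e. those with $k \equiv L(a,b) \pmod{\Z}$; it therefore carries a tautological, and obviously injective, inclusion into $\mbbQ \times \HNtor$. Composing the isomorphism above with this inclusion, and reordering the factors to $\HNtor \times \mbbQ$, yields the asserted injection $H'_1(N - \nu) \hookrightarrow \HNtor \times \mbbQ$, whose image is visibly $\{(b,k) : L(a,b) = k \bmod \Z\}$.

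The only thing that genuinely needs care is checking that the isomorphism furnished by the Proposition is an isomorphism of \emph{extensions} — compatible with the central inclusions of $\Z$ and the projections to $\HNtor$ — and not merely an abstract group isomorphism. This compatibility is what pins down the image condition with the correct projection and forces $m \mapsto (1,0)$, so that the constraint reads $L(a,b) = k \bmod \Z$ with the intended anyon $a$. I expect this bookkeeping to be the main, and indeed only, obstacle; it follows from the inverse-construction statement in the Proposition, but is worth making explicit.
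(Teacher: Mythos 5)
Your proof is correct and is essentially the argument the paper intends: the corollary is stated without proof precisely because it amounts to unwinding the pullback description $\mbbQ \times_{\QmodZ} \HNtor$ furnished by the classification Proposition together with \cref{lem:central-ext-classify}. Your added care about the identification being an isomorphism of extensions (so that $m$ maps to the generator of the central $\Z$ and the image condition reads $L(a,b)=k \bmod \Z$) is exactly the right bookkeeping.
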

\begin{corollary}\label{cor:torsion-element}
    The subgroup of torsion elements in $H_1(N-\nu)$ is in one-to-one correspondence with the annihilator subgroup $\Ann_{a}(\HNtor)$ of $\HNtor$, which consists of torsion elements $b \in \HNtor$ such that $L(a,b) = 0$. 
\end{corollary}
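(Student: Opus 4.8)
The plan is to detect torsion through the central extension of \cref{lem:surgery-on-homology} and then read it off from the explicit model of $H'_1(N-\nu)$ furnished by \cref{cor:Wall-prop}. First I would observe that every torsion element of $H_1(N-\nu)$ already lies in the pullback subgroup $H'_1(N-\nu)$. Indeed, let $\pi \colon H_1(N-\nu)\to H_1(N)$ be the canonical projection. If $x\in H_1(N-\nu)$ satisfies $nx=0$ for some $n>0$, then $n\,\pi(x)=0$, so $\pi(x)\in\HNtor$; hence $x$ lies in $\pi^{-1}(\HNtor)=H'_1(N-\nu)$. Consequently the torsion subgroup of $H_1(N-\nu)$ coincides with the torsion subgroup of $H'_1(N-\nu)$, and it suffices to compute the latter.

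Next I would invoke \cref{cor:Wall-prop}, which realizes $H'_1(N-\nu)$ as the subgroup
\begin{equation}
    H'_1(N-\nu)\;\cong\;\set*{(b,k)\in\HNtor\times\mbbQ : L(a,b)=k \bmod \Z}\subset \HNtor\times\mbbQ,
\end{equation}
with the central meridian $m$ corresponding to $(0,1)$. Since this is an inclusion of abelian groups, an element is torsion in $H'_1(N-\nu)$ precisely when it is torsion in $\HNtor\times\mbbQ$. Because $\mbbQ$ is torsion-free while every element of $\HNtor$ is torsion, a pair $(b,k)$ is torsion in $\HNtor\times\mbbQ$ if and only if its rational coordinate vanishes, i.e. $k=0$. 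Imposing the defining constraint $L(a,b)=k\bmod\Z$ then forces $L(a,b)=0$.

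Putting these together, the torsion subgroup of $H_1(N-\nu)$ is exactly $\set*{(b,0) : b\in\HNtor,\ L(a,b)=0}$, and the assignment $b\mapsto(b,0)$ gives the desired bijection with $\Ann_{a}(\HNtor)=\set*{b\in\HNtor : L(a,b)=0}$. One checks this is a group isomorphism since the inclusion into $\HNtor\times\mbbQ$ is a homomorphism and the map $b\mapsto(b,0)$ is clearly additive and injective.

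I expect the substance of the argument to be entirely front-loaded into the two structural inputs\textemdash the first-step reduction that torsion lands in $H'_1(N-\nu)$, and the explicit image computed in \cref{cor:Wall-prop}\textemdash after which the identification of torsion with the vanishing of the $\mbbQ$-coordinate is immediate. The only point demanding care is the bookkeeping of the meridian: one must confirm that $m$ maps to $(0,1)$ (rather than, say, $(0,0)$) so that the rational coordinate genuinely records the $\Z$-direction of the extension and is therefore torsion-free; this is exactly the content of \cref{lem:central-ext-classify}, which pins the extension class to $L(a,-)$. No further obstacle is anticipated.
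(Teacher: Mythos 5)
Your proof is correct and follows exactly the route the paper intends: the paper states this as an immediate consequence of \cref{cor:Wall-prop}, i.e.\ reading off the torsion of the explicit model $\set*{(b,k)\in \HNtor\times\mbbQ : L(a,b)=k \bmod \Z}$, which is what you do. Your preliminary observation that every torsion element of $H_1(N-\nu)$ already lies in the pullback $H'_1(N-\nu)$ is a point the paper leaves implicit, and you are right to make it explicit.
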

We also need the following useful lemma:
\begin{lemma}\label{lem:lem-2}
    If $L(a,a) = 0$, then 
    \begin{equation}
        q(a) = \sigma(\cK, \ell)/2 \mod \Z
    \end{equation}
    where $\ell$ is a parallel whose homology class $\tilde{a}$ is the unique torsion element in $H_1(N - \nu)$ lifting $a$. 
\end{lemma}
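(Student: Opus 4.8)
The plan is to compute $q(a)$ from its definition while carefully comparing the torsion parallel $\ell$ to the $\sigma$-trivialising parallel that the definition actually uses. Fix $r$ with $ra=0$, a Seifert surface $C$ with $\partial C = r\cK$, and a parallel $\ell_0$ of $\cK$ transverse to $C$ with $\sigma(\cK,\ell_0)=0$, so that by definition $q(a)=\frac{C\pitchfork\ell_0}{2r}$. Since parallels form a $\Z$-torsor, I can write $\ell=\ell_0+km$ for a unique $k\in\Z$. Using $C\pitchfork m=r$ and \eqref{eq:need-this-1}, this yields the bookkeeping identities $C\pitchfork\ell_0=C\pitchfork\ell-kr$ and $\sigma(\cK,\ell)\equiv k \pmod 2$, so the whole statement reduces to evaluating the single integer $C\pitchfork\ell$ attached to the distinguished torsion parallel.

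The key step is to show $C\pitchfork\ell=0$, and this is where the hypothesis $L(a,a)=0$ enters geometrically. By \cref{cor:torsion-element} it guarantees that $\tilde a=[\ell]$ is torsion in $H_1(N-\nu)$, say $r\tilde a=0$, so $r\ell$ bounds a surface $\Sigma$ that can be taken entirely inside the complement $N-\nu$. Then $C$ and $\Sigma$ have disjoint boundaries $r\cK$ and $r\ell$, so $C\cap\Sigma$ is a compact oriented $1$-manifold; since the signed number of its endpoints is zero and those endpoints lie on $(\partial C)\cap\Sigma$ and $C\cap(\partial\Sigma)$, I obtain the symmetry $r\,(C\pitchfork\ell)=\pm\, r\,(\cK\pitchfork\Sigma)$. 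But $\Sigma$ lies in $N-\nu$ and is therefore disjoint from the core $\cK$, whence $\cK\pitchfork\Sigma=0$ and so $C\pitchfork\ell=0$. Conceptually this says that the rational self-linking of $\cK$ computed in its torsion framing vanishes exactly.

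Assembling the pieces gives $q(a)=\frac{C\pitchfork\ell_0}{2r}=\frac{-kr}{2r}=-\frac{k}{2}\equiv\frac{k}{2}\equiv\frac{\sigma(\cK,\ell)}{2}\pmod\Z$, as claimed. The main obstacle is precisely the vanishing $C\pitchfork\ell=0$: one must apply the symmetry of the intersection pairing with matched multiplicities — the same $r$ works for both $\cK$ and $\ell$ exactly because they represent the same torsion class $a$ — and one must know that a surface bounding $r\ell$ may be chosen in the complement $N-\nu$, which is the geometric translation of "$[\ell]$ is torsion in $H_1(N-\nu)$." Everything else is the torsor bookkeeping above, corroborated by the consistency check $L(a,a)=q(2a)-2q(a)=2q(a)$, which already forces $q(a)\in\{0,\tfrac12\}$ and so is compatible with the right-hand side $\tfrac12\sigma(\cK,\ell)$.
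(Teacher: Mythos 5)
Your proof is correct and follows the same route the paper intends: the paper's proof is just the one-line assertion ``this follows from the construction of $q(a)$,'' and you have supplied the details of that unwinding. The one genuinely nontrivial point you correctly identify and justify is that $C\pitchfork\ell=0$ for the torsion parallel (via the symmetry of intersection numbers and a Seifert surface for $r\ell$ inside $N-\nu$); the rest is the torsor bookkeeping with \eqref{eq:need-this-1} and $C\pitchfork m=r$, exactly as in the paper's definition of $q$.
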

\begin{proof}
    This follows from the construction of $q(a)$.
\end{proof}

Now we choose a parallel $\ell$ of $\cK$. Its homology class $\tilde{a}$ in $H_1(N-\nu)$ is a lift of $a$. Let $N_\ell$
be the manifold obtained by doing surgery on $\ell$, and
$M$
 be the bordism from $N$ to $N'$. We have the following sequence of maps, the composites of which are homotopically equivalent:
\begin{equation}\label{eq:diamond-diagram}
    \begin{tikzcd}
        & N- \nu\ar[dl] \ar[dr]& \\ 
        N \ar[dr] & & N_\ell \ar[dl] \\ 
        & M
    \end{tikzcd}
\end{equation}
In that case, the following lemma holds:
\begin{lemma}
    Let $m$, $\tilde{a}$  denote the classes in $H_1(N-\nu)$ of the meridian and parallel respectively. Then the maps on $H_1$ of \eqref{eq:diamond-diagram} satisfy
\begin{equation}\label{eq:homology-diamond-diagram}
    \begin{tikzcd}
        & H_1(N- \nu) \ar[dl] \ar[dr]& \\ 
       H_1(N) \simeq H_1(N-\nu)/m \ar[dr] & & H_1(N_\ell) \simeq H_1(N-\nu)/\tilde{a} \ar[dl] \\
        & H_1(M) \simeq H_1(N-\nu)/(m, \tilde{a})
    \end{tikzcd}
\end{equation}
\end{lemma}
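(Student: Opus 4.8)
The plan is to reduce the whole diagram to repeated use of \cref{lem:surgery-on-homology} together with a single Mayer--Vietoris computation for $M$, and then to fix the two remaining maps by naturality, using that every arrow in \eqref{eq:diamond-diagram} is induced by an inclusion of spaces.

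First I would treat the two upper maps. The inclusion $N-\nu \hookrightarrow N$ presents $N$ as the complement $N-\nu$ with the solid torus $\nu = D^2\times S^1$ glued back along $\partial\nu$, and the circle bounding the meridian disk of $\nu$ is exactly $m$; thus \cref{lem:surgery-on-homology} gives $H_1(N)\cong H_1(N-\nu)/m$, with the isomorphism induced by inclusion. The manifold $N_\ell$ is built from the \emph{same} complement $N-\nu$ by gluing in a solid torus $S^1\times D^2$ whose meridian disk is bounded, by the definition of surgery \eqref{eq:surgery-def}, by the parallel $\ell$, whose class in $H_1(N-\nu)$ is $\tilde a$. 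The identical argument (\cref{lem:surgery-on-homology} with $\ell$ playing the role of the meridian) yields $H_1(N_\ell)\cong H_1(N-\nu)/\tilde a$, again via the inclusion-induced map.

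Next I would compute $H_1(M)$. Since $M = N\times I \cup_{\nu\times\{1\}}(D^2\times D^2)$ is $N\times I$ with a two-handle attached along $\cK$, it is homotopy equivalent to $N$ with a single $2$-cell attached along $\cK$; equivalently, I apply Mayer--Vietoris with $A = N\times I \simeq N$, $B = D^2\times D^2 \simeq \ast$, and $A\cap B = \nu \simeq S^1$. The connecting map $H_1(A\cap B)\to H_1(A)\oplus H_1(B)$ sends the generator $[\cK]$ of $H_1(\nu)\cong\Z$ to $(a,0)$, so $H_1(M)\cong H_1(N)/\langle a\rangle$. Because $a$ is the image of $\tilde a$ under the surjection $H_1(N-\nu)\twoheadrightarrow H_1(N)=H_1(N-\nu)/m$, this rewrites as $H_1(M)\cong H_1(N-\nu)/(m,\tilde a)$, and the isomorphism is induced by the inclusion $N = N\times\{0\}\hookrightarrow M$.

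Finally I would pin down the two lower maps by naturality. The diagram of spaces \eqref{eq:diamond-diagram} commutes up to homotopy, so the induced square on $H_1$ commutes; since $H_1(N-\nu)$ surjects onto both $H_1(N)$ and $H_1(N_\ell)$, the maps $H_1(N)\to H_1(M)$ and $H_1(N_\ell)\to H_1(M)$ are determined by the common composite $H_1(N-\nu)\to H_1(M)$, which is the quotient by $(m,\tilde a)$. Hence they are, respectively, the further quotient by $\tilde a$ and by $m$, exactly as claimed in \eqref{eq:homology-diamond-diagram}. The step I would be most careful with is the Mayer--Vietoris computation for $M$: one must correctly identify the image of the connecting map as $a=[\cK]$ and confirm that the resulting isomorphism $H_1(M)\cong H_1(N)/\langle a\rangle$ is the one induced by the inclusion $N\hookrightarrow M$, so that it is genuinely compatible with the rest of the diagram rather than merely abstractly isomorphic.
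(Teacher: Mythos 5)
Your proposal is correct and follows essentially the same route as the paper: the two upper identifications are exactly the application of \cref{lem:surgery-on-homology} (to the meridian for $N$ and to the parallel for $N_\ell$), and the bottom identification comes from the observation that attaching the two-handle is, up to homotopy, attaching a $2$-cell along $\cK$, giving $H_1(M)\simeq H_1(N)/a \simeq H_1(N-\nu)/(m,\tilde a)$. Your Mayer--Vietoris computation and the naturality argument for the two lower maps simply make explicit what the paper leaves implicit.
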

\begin{proof}
    The middle two identifications come from the discussion in \cite{wall1962killing}, whilst the bottommost follows from the fact that $M$ is homotopic to gluing a two-disc along $\cK$, from which we have
    \begin{equation}
        H_1(M) \simeq H_1(N)/a \simeq H_1(N-\nu)/(m, \tilde{a}).
    \end{equation}
\end{proof}

\section{Kirby diagrams and toroidal Chern--Simons theories}\label{sec:Dehn-surgery-and-CS}
In this section we relate Kirby diagrams, which are the standard blueprints for constructing three-manifolds, to toroidal Chern--Simons theories, which we view as classical descriptions of abelian topological orders.

\subsection{Kirby diagrams}\label{subsec:kirby-calculus}
\begin{spacing}{1.3}
A well-known result of Lickorish \cite{lickorish1962representation} and Wallace \cite{wallace1960modifications} is that every closed, connected, orientable three-manifold arises from performing integral Dehn surgery on a link in the three-sphere. This gives us a way to represent diffeomorphism classes of three-manifolds by links in $S^3$, which go under the name of \emph{Kirby diagrams}. We will review them in this subsection and describe how information about the linking form, and the trace $M$, is contained in the diagrams. (For a more in-depth review of how handlebody decompositions of four-manifolds are related to surgery presentations of their bounding three-manifolds, see \cite{mandelbaum1980four}.)

Given a knot $\cK$ in $S^3$, the framing on $S^3 = SU(2)$ gives a canonical framing of the knot, i.e.,  the isotopy class of a parallel for each knot, which is called the $0$-parallel or the $0$-slope. 
Let $L$ be a link in $S^3$. 
In order to perform surgery on $L$, we need to pick  a parallel class for each link component. In $S^3$, it follows that we simply need to assign an integer to each link component, which is called the framing number. 
A \emph{Kirby diagram} is a link diagram together with a framing number assigned to each component, and a \emph{surgery presentation} of a three-manifold $N$ is a Kirby diagram for which performing the surgery on $S^3$ prescribed by the diagram produces $N$.

Let $\sigma$ denote the spin structure on $S^3$ induced from its framing, and $\cK$ be a knot in $S^3$.
It is straightforward to see that the $k$-parallel $\ell_k$ satisfies 
\begin{equation}
    \sigma(\cK, \ell_k) = k \mod 2.
\end{equation}
It follows from \cref{lem:surgery-and-spin-structure} that a Kirby diagram with all framing numbers even produces a spin three-manifold.\footnote{Luckily, it is a theorem that every Kirby diagram can be reduced to one with entirely even framings using Kirby moves \cite[Remark 3]{kirby1978calculus}.} We will call such a Kirby diagram \emph{even}.

A surgery presentation of $N$ also comes with an oriented simply-connected four-manifold $M$ whose boundary is $N$:
\begin{equation}
    M \coloneqq D^4 \bigcup_{S^3} M',
\end{equation}
where $M' \colon S^3 \to N$ is the trace of the surgery.
Furthermore, by \cref{lem:surgery-and-spin-structure}, when the Kirby diagram is even, the manifold  $M$ is a spin manifold bounding $N$.

Since $M$ has a single zero-handle ($D^4$) and a two-handle for each link component, we see that $M$ is simply-connected with homology 
\begin{equation}
        H_*(M) = 
        \begin{cases}
            \Z  &* =0, \\
            \Z^{n} & * = 2, \\ 
            0& \mathrm{else.}
        \end{cases}
\end{equation}
Here $n$ is the number of components of $L$.
We write $e_i \in \Z^{n}$ for the basis element corresponding to the $i$-th component.  The intersection form $\lk \colon \Z^n \otimes \Z^n \to \Z$ can be read off from the Kirby diagram: in the $e_i$ basis, 
the diagonal elements $\lk_{ii}$ are equal to the framing numbers, and any off-diagonal element $\lk_{ij}$ is the linking number $\lk(\cK_i, \cK_j)$ between the $i$-th and $j$-th component of $L$. 
It follows that a Kirby diagram is even  if and only if $\lk$ is even.\footnote{Recall that an integral symmetric bilinear form $K$ is even if $K(v, v) \in 2\Z$ for every $v$.}
Now consider the relative homology sequence applied to $(M, N)$:\end{spacing}
\begin{center}\label{eq:surgery-presentation-LES}
    \begin{tikzcd} [column sep=0.3in, row sep=small] 
        * & H_*(N) & H_*(M) & H_*(M,N)\\
         4 & 0 & 0 & \mathbb{Z} \\
        3 & \mathbb{Z} & 0 & 0 \\
        2 & \ker \lk & \Lambda & \Lambda^* \\
        1 & \Lambda^*/\Lambda & 0 & 0 \\
        0 & 0 & \Z & \Z. \\
     \arrow[from=2-4, to=3-2, out=-20, in=160, "\simeq"']
     \arrow[from=4-2, to=4-3]
        \arrow[from=4-3, to=4-4, "f_\lk"]
        \arrow[from=4-4, to=5-2, out=-20, in=160]
        \arrow[from=6-3, to=6-4, "\simeq"]
    \end{tikzcd}
    \end{center}
Here we write $\Z^n$ abstractly as a lattice $\Lambda$, and view $\lk$ as an even integral symmetric bilinear form on $\Lambda$. By \Poincare duality, the relative homology $H_2(M,N)$ can be viewed as the dual of $\Lambda$, and the map 
$f_\lk \colon \Lambda \to \Lambda^*$ is the adjoint of $\lk$. It follows that the $H_1(N) \simeq \Lambda^*/\Lambda$.\footnote{The discussion can be understood geometrically: let $\alpha^*_i$ denote the dual element of $\alpha_i$ in $\Lambda^*$. Then the transgression map takes $\alpha^*_i$ to the homology class of the $i$-th meridian. Now counting punctures of Seifert surfaces with boundaries $\cK_i$, it can be shown that $H_1(N)$ with integral coefficients is generated by the $k$ meridians $m_1, \ldots, m_k$ subject to the set of relations $n_i m_i + \sum_{j \neq i} \lk(\cK_i, \cK_j) m_i = 0$ for each $i$. The reader may refer to \cite[\S 5]{gompf} for more details.
}
Furthermore, the linking form on $\HNtor$ can be defined as follow: given $a, b \in \HNtor$, 
let $\tilde{a}, \tilde{b}$ be lifts of $a$ and $b$ in $\Lambda^*$. Suppose that $ra = 0$, in which case $r\tilde{a}$ further lifts to a class $\tilde{a}'$ in $\Lambda$. We have 
\begin{equation}\label{eq:disc-L}
    L(a,b) = \frac{\tilde{b}(\tilde{a}')}{r} \mod \Z.
\end{equation}
Furthermore, suppose that $\lk$ is even, then the quadratic refinement $q$ is
\begin{equation}\label{eq:disc-q}
    q(a) = \frac{\lk(\tilde{a}', \tilde{a})}{2r^2}
    = \frac{\tilde{a}(\tilde{a}')}{2r} \mod \Z.
\end{equation}
Note this is independent of the choice of $\tilde{a}$ and $\tilde{a}'$ as $\lk$ is even. Note that by the argument in \cite[\S 5.3]{hopkins2005quadratic}, the descended form on the quotient is non-degenerate.\footnote{The passage from $(\Lambda, \lk)$ to $(\Lambda^*/\Lambda, q)$ 
also appears in number theory, where it is called the \emph{discriminant group} construction. See \cite{MR1625724}.}

\subsection{Toroidal Chern--Simons theory}\label{subsec:CS-theory}
We claim that the physical counterpart of surgery presentations of three-manifolds are toroidal Chern--Simons theories. Here we give a concise introduction to toroidal Chern--Simons theories, and refer the reader to \cite{belov2005classification} for more detail.

A toroidal Chern--Simons theory is given by a lattice $\Lambda \simeq \Z^n$ together with a non-degenerate even integral symmetric bilinear pairing $K \colon \Lambda \otimes \Lambda \to \Z$, typically called the $K$-matrix.\footnote{When $K$ is not even, the Chern--Simons theory defines a fermionic topological order; see \cite{belov2005classification}.} Recall that such a form is non-degenerate if $\ker K = 0$. Let $T = \Lambda \otimes \mbbR/\Z \simeq U(1)^n$ be the associated torus and $A_i$ be the $i$-th $U(1)$ gauge field.
The Lagrangian
\begin{equation}\label{eq:CS-Lagrangian}
    2 \pi i \int \, K^{ij} \, A_i \,  dA_j,
\end{equation}
defines a Chern--Simons theory $\CS_{\Lambda, K}$ with gauge group $T$. We will omit $\Lambda$ whenever there is no risk of confusion.

Now we study the quantum topological order associated to \eqref{eq:CS-Lagrangian}, following \cite{kapustin2011topological}. To start, there are topological Wilson lines labeled by the dual lattice $\Lambda^* = \Hom(\Lambda, \Z)$. Additionally, there are magnetic monopole operators labeled by $\Lambda$.
Furthermore, given $v \in \Lambda^*$, the magnetic monopole associated to $v$ has electric charge $f_K v \in \Lambda^*$, where $f_K \colon \Lambda \to \Lambda^*$ is the adjoint of $K$. 
For any Wilson line labeled by $l$, it follows that we 
can place a $v$ monopole at the junction between Wilson lines $l$ and $l+Kv$, thus identifying them.
Therefore the set of non-equivalent topological Wilson lines, i.e. the anyon lines, is precisely the discrminant group $D \coloneqq \Lambda^*/\Lambda$. 
The braiding $L$ and topological spin $q$ are given by 
\begin{equation}
    L(a,b) = \frac{\tilde{b}(\tilde{a}')}{r}, \quad q(a) = \frac{\tilde{a}(\tilde{a}')}{2r}.
\end{equation}
Here $a, b \in D$ are anyons with lifts $\tilde{a}, \tilde{b}$ in $\Lambda^*$, and $r\tilde{a} = f_K \tilde{a}'$ for some $r$ and $\tilde{a}' \in \Lambda$.

\subsection{From Kirby diagrams to toroidal Chern--Simons theories}\label{subsec:Kirby-to-CS}
Consider an even Kirby diagram with $d$ link components.
When the linking form $\lk$ on $\Lambda = \Z^d$ is non-degenerate, we can use $\lk$ to define toroidal Chern--Simons theory $\CS_\lk$. 
It follows from the discussion in the previous two subsections that the quantum theory of $\CS_\lk$ is precisely the topological order $\TO_N$ associated to $N$, where $N$ is the spin three-manifold obtained from the Kirby diagram. When $\lk$ is degenerate, the same discussion holds when we replace $\Lambda$ 
with $\Lambda/(\ker{\lk})$, which has a non-degenerate pairing.

Now we can show that every abelian topological order comes from a spin three-manifold:
by \cite{belov2005classification}, every abelian topological order $(D,q)$ comes from a toroidal Chern--Simons theory $\CS_K$, where $K$ is a non-degenerate bilinear form on some lattice $\Lambda$.
Suppose we build a Kirby diagram with $d = \mathrm{rank}(\Lambda)$ components, such that 
\begin{enumerate}
    \item the framing number for the $i$-th link component $\cK_i$ is $K_{ii}$. 
    \item the linking number between any two components $\cK_i$ and $\cK_j$ is $K_{ij}$.
\end{enumerate}
For example, we can simply make all components $\cK_i$ unknots that link with each other. 
By the discussions in \cref{subsec:kirby-calculus} and \cref{subsec:CS-theory}, we have the identification
\begin{equation}
    \CS_{K} = \TO_{N},
\end{equation}
where the spin three-manifold $N$ is obtained by performing surgery using the Kirby diagram. 

\begin{example}[Lens space and $U(1)_n$ part II]\label{ex:U1-level-n-2}
    Let us revisit the lens space example in \cref{ex:U1-level-n-1}. 
    Let $n$ be an even positive integer. The lens space $L(n,1)$ has a simple Kirby diagram:
    \begin{equation}\label{eq:kirby-Lens}
    \begin{tikzpicture}
        \draw[thick] (0,0) circle (0.5);
        \node at (0, 0.7) {$n$};
      \end{tikzpicture}
    \end{equation}
    The Chern--Simons theory asscociated to the Kirby diagram \eqref{eq:kirby-Lens} is precisely $U(1)$ level $n$, which has the following Lagrangian:
    \begin{equation}
      2\pi i  \int n\, A\, dA.
    \end{equation}
    Simiarly, $L(n, -1)$ lens spaces has a Kirby diagram consisting of an unknot with framing number $-n$. It corresponds to the $U(1)$ level $-n$ theory.
    \end{example}

    We also use the algorithm above to construct a spin three-manifold corresponding to toric code:
    \begin{example}[Toric code part II]\label{ex:toric-code-2}
    The toric code has a classical $U(1)^2$ Chern--Simons description by 
        \begin{equation}
            2 \pi i \int 2 (A_1 dA_2 + A_2 dA_1).
        \end{equation}
        We would like to find two knots that link twice with each other and assign the $0$ framing to both of them. The simplest such Kirby diagram is the following:
        \begin{equation}\label{eq:Kirby-toric}
            \begin{tikzpicture}
            \begin{knot}[
            flip crossing=1,
            flip crossing=3,
            ]
            \strand[thick] (0,0) ellipse (0.5cm and 1cm);
            \strand[red, thick] (0,0) ellipse (1cm and 0.5cm);
            \node at (-0.7, 0.7) {$0$};
            \node[red] at (0.7, -0.7) {$0$};
            \end{knot}
            \end{tikzpicture}     
        \end{equation}
        This defines a spin three-manifold $N_{\tc}$ whose associated topological order is the toric code. 
        Note that the $e$ and $m$ anyons correspond to the meridians around the red and black unknots respectively.
        More generally, consider the even Kirby diagram of two unknots that link $n$ times around each other with $0$ framing for each component. The associated spin manifold corresponds to $\Z_n$ gauge theory.
    \end{example}

\section{Geometry of gapless boundaries}
\label{sec:geometry-of-gapless-boundaries}
In this section we show how simply-connected spin four-manifolds, such as the trace of an even Kirby digram, give rise to a family of gapless boundaries called the Narain boundary CFTs \cite{MR4317226, GeneralizedNarain}.

\subsection{Narain boundary CFTs}\label{subsec:Narain CFT}
In this subsection we follow the discussion in \cite[\S 2]{GeneralizedNarain}.
Fix a lattice $\Lambda$ and 
an even non-degenerate form $K$ on $\Lambda$ of signature $(p,q)$. 
The pair $(\Lambda, K)$ defines a toroidal Chern--Simons theory $\CS_K$.
Let $\LambdaRplus$ be a maximal positive definite subspace of $\LambdaR \coloneqq \Lambda \otimes \R$ and $\LambdaRminus$ be its orthogonal complements. We call a choice of $\LambdaRplus$ (equivalently $\LambdaRminus$) a \emph{polarization} of $\Lambda$, and denote a polarization by $m$.\footnote{Choose an identification $$\phi \colon (\LambdaR, K) \simeq (\R^{p|q}, \diag(\id_p, -\id_q)).$$ We see that one such $\LambdaRplus$ is $\R^{p|0}$. Furthermore, any other $\LambdaRplus$ is of the form $g \,\R^{p|0}$ for some $g \in O(p, q; \R)$. Since the stablizer of $\R^{p|0}$ is $O(p; \R) \times O(q; \R)$, we see that the space of polarization is the symmetric space
\begin{equation}
    O(p,q; \R)/O(p; \R)\times O(q;\R).
\end{equation}
In particular it only depends on the signature $p$ and $q$.\label{footnote:polarization-and-signature}}
Take some $v \in \Lambda$ and denote by $v^{\pm}$ its projection to $\LambdaR^{\pm}$.
Note that $K$ and $-K$ restrict to positive-definite definite norms on $\LambdaRplus$ and $\LambdaRminus$ respectively. We denote both norms by $|-|$.
The following identity holds: 
\begin{equation}\label{eq:def-of-K}
    K(v) \coloneqq K(v,v) = |v^+|^2 - |v^-|^2.
\end{equation} 
Additionally, 
we get a real-valued positive-definite norm $H$ on $\Lambda$ given by 
\begin{equation}\label{eq:def-of-H}
    H(v) \coloneqq |v^+|^2 + |v^-|^2.
\end{equation}

By \cite{GeneralizedNarain}, the data of $(\Lambda, K, m)$ defines a 2D CFT boundary $\cB_{\Lambda, K, m}$ of $\CS_K$ with central charge $(p,q)$  and $T_\Lambda$ symmetry, where $T_{\Lambda} \coloneqq \Hom(\Lambda, U(1))$ is the dual torus.\footnote{Conversely $\Lambda$ is the weight lattice of $T_{\Lambda}$.} Note that the data of $(\Lambda, K)$ is discrete while $m$ can be deformed continuously.
The primary local operators $\cO_v$ are labeled by $v \in \Lambda$, with energy and (worldsheet) momentum given by $H(v)$ and $K(v)$ respectively.
It follows that the partition function is given by 
\begin{equation}
    Z_{\Lambda, K, m}(\tau) = 
    \frac{1}{\eta(\tau)^p \bar{\eta}(\bar{\tau})^q} \sum_{v \in \Lambda} e^{\pi i y H(v) + \pi i x K(v)}.
\end{equation}
Here $\tau = x + i y$.
To capture the boundary data, we also need to consider the twisted sectors, which are labeled by the group of $\CS_K$ anyons $D = \Lambda^*/\Lambda$.
For $a \in D$, we have twisted primary operators $\cO_v$, where $v$ is now in the coset $\Lambda + a$, once again with energy $H(v)$ and momentum $K(v)$. The twisted partition function is 
\begin{equation}\label{eq:narain-partition-function}
    Z_{\Lambda, K,m}^a(\tau) = 
    \frac{1}{\eta(\tau)^p \bar{\eta}(\bar{\tau})^q} \sum_{v \in \Lambda + a} e^{\pi i y H(v) + \pi i x K(v)}.
\end{equation}
Note that 
\begin{equation}
    \Theta_{\Lambda, K, m}^a(\tau) \coloneqq \sum_{v \in \Lambda + a} e^{\pi i y H(v) + \pi i x K(v)}
\end{equation}
is the \emph{Siegel theta fuction} of $\Lambda$.
Here is how $\Theta_{\Lambda, K, m}^a$ behaves under modular transform:
\begin{equation}
    \begin{aligned}
        \Theta_{\Lambda, K,m}^a(\tau+1) &= 
        e^{2\pi i q(a)}\,
        \Theta_{\Lambda, K,m}^a(\tau, \bar{\tau}), \\ 
    \Theta_{\Lambda, K,m}^a(-1/\tau) &= 
    \frac{1}{\sqrt{|D|}} e^{-\frac{\pi i}{4}(p-q)}\tau^{p/2} \tau^{q/2}\sum_{b \in D}e^{2\pi i L(a,b)} \Theta_{\Lambda, K, m}(\tau).
\end{aligned}
\end{equation}
It follows that the Narain partition function  $Z_{\Lambda, K,m}^a(\tau)$ satisfies \eqref{eq:modular-covariance} and $\cB_{\Lambda, K, m}$ is a boundary of $\CS_K$.

\begin{example}[Narain CFTs]\label{ex:Narain-CFTs}
    Let $\Lambda = \Z^{2}$ and $H$ be the hyperbolic matrix. 
    Since $H$ is unimodular, $\CS_H$ is the trivial topological phase and its boundaries are two-dimensional modular invariant CFTs.
    The space of polarizations is parameterized by the positive real line, with $r$ corresponding to the polarization with $\LambdaRplus = \R(x_1 + r x_2)$. The associated Narain theory 
    $\cB_{\Z^2, H, r}$ 
    is the compact boson theory at radius $r$.

    More generally, take $\Lambda = \Z^{2n}$ and $K = H^{\oplus n}$. Since $H$ is unimodular we see that the Narain boundaries of $\CS_K$ are modular invariant CFTs with central charge $c = n$. Indeed we recover the original Narain conformal field theories, which are theories of $n$ compact bosons (see \cite[\S 2.1]{GeneralizedNarain}).
\end{example}

\begin{example}[Chiral bosons]\label{ex:chiral-bosons}
    Suppose $(\Lambda, K)$ is positive-definite. There is a unique choice of polarization with $\LambdaRplus = \LambdaR$ and $\LambdaRminus = 0$, hence a unique Narain boundary CFT.
    For example, let $\Lambda = \Z$ and $H = (n)$ where $n$ is an even positive integer. The unique Narain boundary of $U(1)$ level $n$ is precisely the chiral boson at radius $n$. The same discussion also holds for the negative-definite case.
\end{example}

\subsection{Hodge theory for manifolds with boundary}
\label{subsec:Hodge-theory}
Consider a closed oriented four-manifold $M$ and a Riemannian metric $g$ on $M$. There exists a Hodge star operator 
$\star  \colon \Omega^{i}(M) \to \Omega^{n-i}(M)$ such that for every $\alpha, \beta \in \Omega^{i}(M)$, we have 
\begin{equation}\label{eq:hodge-star-def}
    \alpha \wedge \star  \beta = (\alpha, \beta)\, \dvol.
\end{equation}
Here $(-, -) \colon \Omega^i(M) \otimes \Omega^i(M) \to \Omega^0(M)$ is given by the induced inner product on $i$-forms. Furthermore, $\dvol \in \Omega^4(M)$ is the volume form.

Now we restrict our Hodge star operator to harmonic forms.
Recall that a form $\alpha$ is \emph{harmonic} if $\nabla \alpha = 0$, i.e., it is a zero eigenfunction for the Laplacian operator $\nabla$. Note that when $M$ is compact, this is equivalent to $\alpha$ being closed ($d\alpha = 0$) and co-closed ($d\star  \alpha = 0$). 

Let $\cH^i(M) \subset \Omega^i(M)$ denote the subspace of harmonic $i$-forms on $M$. 
Since $\star $ takes closed to co-closed forms, it preserves the subspace of harmonic forms $\cH^*(M)$. Moreover, observe that $\star $ is an involution on $\cH^2(M)$, i.e., $\star ^2 = 1$. Therefore we can use $\star $ to split $\cH^2(M)$ into positive and negative eigenspaces. Let $\cH^{\pm} \subset \cH^2(M)$ denote the $\pm 1$ eigenspaces of $\star $. It follows from \eqref{eq:hodge-star-def} that $\cH^{\pm}$ are orthogonal positive and negative subspaces of $\cH$ with respect to the wedge pairing $\int_M - \wedge -$ on $\cH^2(M; \mbbR)$.

The main result of Hodge theory states that when $M$ is compact, $\cH^i(M)$ is isomorphic to $H^i(M; \mbbR)$, that is, every de Rham class has a unique harmonic representative. Therefore $\star $ defines a polarization on $H^2(M; \R) = H^2(M; \Z) \otimes \R$.
\begin{example}\label{ex:S2-times-S2}
    Let $M = S^2 \times S^2$. Here $H^2(M; \Z) = H_2(M; \Z) = \Z^2$, generated by $a = [S^2 \times 1]$ and $b = [1 \times S^2]$. The intersection form is the hyperbolic metric $H$.

    For $r \in \R_+$ let $g_r$ denote the round metric on $S^2$ with volume $r^2$.
    Consider the metric $g_{r_1} \times g_{r_2}$ on $S^2 \times S^2$. The corresponding Hodge star operator is given by $\star a = \frac{r_1}{r_2} b$. 
    It follows that the positive definite subspace is  $\R(a + \frac{r_1}{r_2} b)$. 
\end{example}
In our case, we are interested in the case where $(M, g)$ is an orientable Riemannian manifold with boundary, where the story is more subtle.\footnote{The main subtelty arises from the following fact: a simple integration by parts argument shows that forms in the kernel of the Laplacian may fail to be closed and co-closed in the presence of a boundary. By the Hodge--Morrey--Friedrichs decomposition, every relative $p$-class is represented by a unique closed and co-closed $p$-form satisfying Dirichlet boundary conditions; however, the star operator {\it swaps} Dirichlet and Neumann boundary conditions, sending relative classes to absolute ones. One can see \cite{schwarz2006hodge} for more details.}
We follow \cite{atiyah1975spectralI} by choosing a metric for which the boundary is locally isometric to a cylinder (note this preserves the diffeomorphism type of $M$) and extending a form with vanishing signature on $M$ across $N \times \R^+$. In this case, the space of such {\it square-integrable} forms which are harmonic can be identified with the image of the map $H^*(M, N; \R) \to H^*(M; \R)$, or equivalently, the kernel of $H^*(M; \R) \to H^*(\partial M; \R)$.
We denote that space as $H'^*(M; \R)$.
There is a natural Hodge star operator in $H'^*(M; \R)$, generalizing the Hodge star operator in the compact case. Once again, this allow us to split $H'^2(M; \R)$ into positive and negative definite subspaces, just as above.

\subsection{From simply-connected Riemannian four-manifolds to Narain boundaries}\label{subsec:four-manifold-to-Narain}
Let $M$ be a simply-connected spin four-manifold with boundary $N$.\footnote{For example, the trace of some even Kirby diagram (see \cref{subsec:kirby-calculus}).} Note that $\Lambda = H_2(M; \Z)$ is a lattice and 
$\Lambda^* \simeq H_2(M, N; \Z)$ by \Poincare duality. Additionally, 
the intersection form $\pitchfork$ is an even pairing on $\Lambda$. 
For simplicity, let us assume that $\pitchfork$ is non-degenerate, in which case 
the map 
\begin{equation}
    f \colon \Lambda \to H_2(M; \Z) \to H_2(M, N; \Z) \simeq \Lambda^*
\end{equation}
is injective and induces an equivalence on real cohomology.
By \cref{subsec:Kirby-to-CS}, $(H_2(M; \Z), \pitchfork)$ defines a toroidal Chern--Simons theory $\CS_\pitchfork$ whose quantum theory is the topological order $\TO_N$. 

Now we fix a Riemannian metric $g$ on $M$ for which the boundary is locally isometric to a cylinder. 
Since $f \otimes \R \colon H_2(M; \R) \to H_2(M,N;\R)$ is an isomorphism, we have $H'^2(M;\R) \simeq H_2(M,N;\R) = H_2(M;\R)$.
By \cref{subsec:Hodge-theory}, the Hodge star operator gives a polarization $m_g$ on $(H^2(M; \Z), \pitchfork)$. Using this, we can define a Narain CFT $\cB_{H^2(M; \Z), \pitchfork, m_g}$ at the boundary of $\CS_{\pitchfork} = \TO_N$. Therefore we have a commutative diagram:
\begin{equation}
    \begin{tikzcd}
            \textrm{Simply-connected four-manifolds}
        \ar[rrr, "\textrm{Taking boundary}"] \ar[d, "{\cB_{H_2(M; \Z), \pitchfork, m_g}}", swap] &&& \textrm{Spin three-manifolds} 
        \ar[d, "{\TO_N}"]
        \\ 
        \textrm{Narain boundary CFTs} \ar[rrr, "\textrm{Boundary of}"] &&& 
        \textrm{Topological order}
    \end{tikzcd}
\end{equation}
In particular, when $M$ is closed  we get an ordinary Narain CFT.
\begin{example}\label{ex:S2-times-S2-and-Narain}
    Let $M = S^2 \times S^2$ together with the metric $g_{r_1} \times g_{r_2}$ given in \cref{ex:S2-times-S2}. It follows that the Narain boundary associated to $(M, g_{r_1} \times g_{r_2})$ is the scalar boson CFT at radius $R = r_1/r_2$ (\cref{ex:Narain-CFTs}).\footnote{In general we expect that every central charge $c$ ordinary Narain CFT comes from $M = (S^2 \times S^2)^{\#c}$ equipped with a specific metric $g$.}
\end{example}

\begin{example}[Lens space and chiral bosons]\label{ex:U1-level-n-3}

    Let $n$ be an even positive integer.
    Recall the even Kirby diagram
     in \cref{ex:U1-level-n-2} that produces the lens space 
     $L(n, 1)$. Since the $K$-matrix $(n)$ is positive definite, we have a unique polarization. 
    By \cref{ex:chiral-bosons}, for any Riemannian cylindrical metric on the trace of $L(n,1)$, the associated Narain boundary 
    is the chiral boson CFT at radius $n$. The same holds for $n$ an even negative integer.
\end{example}

\begin{example}[Toric code and its gapless boundaries]
    Consider the even Kirby diagram from \cref{ex:toric-code-2} whose topological order is the toric code. Recall that the $K$-matrix is 
    \begin{equation}\label{eq:toric-code-k-matrix}
        \begin{pmatrix}
            0 & 2 \\ 
            2 & 0
        \end{pmatrix}.
    \end{equation}
    Referring to \cref{footnote:polarization-and-signature}, the space of polarizations on $(\Z^2, K)$ is also $\R_+$. Therefore we also have an $\R_+$-family of gapless boundaries of the toric code.\footnote{Note that the Ising boundary of toric code is \emph{not} a member of the Narain boundaries.}  See \cite[\S 2.4]{GeneralizedNarain} for more details. The same discussion holds for $\Z_n$ gauge theories for any $n$.
\end{example}

\section{Geometry of anyon condensations}\label{sec:geo-of-anyon-cond}
In this section we turn to a particular surgery procedure that corresponds anyon condensations.

\subsection{Anyon condensation and gapped boundaries}
\label{subsec:anyon-cond-and-gapped-boundaries}
Let us quickly review the theory of anyon condensations.\footnote{We will focus on the abelian case, where the theory is significantly simpler. We refer the reader to \cite{kong2014anyon} for the theory of anyon condensation for nonabelian topological orders.}
We begin by condensing a single boson. Take a topological order $(D, q)$. Recall that an anyon $a \in D$ is a \emph{boson} if $q(a) = 0$.  This also implies that $L(a,a) = 2q(a) =0$. When we condense $a$, we enter a different topological phase where $a$ becomes identified with the vacuum groundstate. Furthermore, the anyons which braid non-trivially with $a$ get screened out, leaving behind only those anyons $b$ such that $L(a,b) = 0$, i.e. having trivial mutual braiding.
Algebraically the condensed phase is described by the topological order $(D' = \Ann_a(D)/a, q')$, where $q'([b]) = q(b)$ for $[b]$ representing a class in the quotient $D'$. It is straightforward to check that $q'$ is a well-defined quadratic form when $a$ is a boson.
 
There is a gapped domain wall $\cD$ between the original phase $(D, q)$ and the condensed phase $(\Ann_a(D)/a, q')$, whose anyon content consists of $D/a$.\footnote{Note that there is no braiding between anyons on the domain wall.} When we bring an anyon $b \in D$ to the boundary, it is sent to the equivalence class $[b] \in D/a$. Only those anyons that braid trivially with $a$ can enter the bulk of the condensed phase.

More generally, we can simultaneously condense many anyons at once: let $\cA$ be a subgroup of $D$. We say that $\cA$ consists of mutual bosons if $q(a) = 0$ for all $a \in \cA$. This implies anyons in $\cA$ have trivial mutual braidings by \eqref{eq:S-from-theta}.
The condensed phase is the topological order given by $(\Ann_{\cA}(D)/\cA, q')$, and the gapped domain wall has anyons $D/\cA$. Once again $q'$ descends from $q$. Note that $|\Ann_{\cA}(D)| = |D|/|\cA|^2$.

When $|\cA| = \sqrt{|D|}$, we say that $\cA$ is a \emph{Lagrangian subgroup} of $D$. When $\cA$ is Lagrangian the condensed phase  is trivial as $\Ann_{\cA}(D)/\cA = 0$. In this case the gapped domain wall $\cB_{\cA}$ between $(D, q)$ and the trivial phase is in fact a gapped boundary of $(D,q)$. 
The twisted partition function for the boundary $\cB_{\cA}$ is
\begin{equation}
    Z_{\cB_{\cA}}^a(\tau) = \delta_{a \in \cA}.
\end{equation} 
Note that $Z_{\cB_{\cA}}^a$ is independent of $\tau$ as it is topological. It is straightforward to see $Z_{\cB_{\cA}}^a$ satisfies modular covariance \eqref{eq:modular-covariance} if and only if $\cA$ is a Lagrangian subgroup of $D$. 

\begin{example}[Gapped boundaries of toric code]\label{ex:toric-code-3}
Consider the toric code \cref{ex:toric-code-1}. It has two Lagrangian subgroups $\cA_{e} = \{1, e\}$ and $\cA_{m} = \{1, m\}$. They are called the \emph{electric} and \emph{magnetic} gapped boundaries respectively.    
\end{example}

\subsection{Wall's surgery}\label{subsec:wall-surgery}
In  \cite{wall1962killing} Wall 
studies the effect of surgery on middle-dimensional homology in compact connected manifolds, especially with respect to killing certain homology classes. Wall gives a prescription for how to reduce a manifold of dimension $2m+1$ to an $(m-1)$-connected manifold by surgering classes in middle-dimensional homology, when $m > 1$. We adapt Wall's reasoning to the setting of three-manifolds and perform a surgery whose effect on $\HNtor$ is precisely anyon condensation.

Fix a spin three-manifold $N$. Suppose we have a boson $a \in \HNtor$ in the corresponding topological order $\TO_N = (\HNtor, q)$. We now describe a surgery procedure that produces a spin three-manifold $N'$ whose topological order $\TO_{N'}$ represents the condensed phase $(\Ann_a(\HNtor)/a, q')$.
Let $\cK$ be a knot in $N$ that represents $a$ and $\nu$ be a tubular neighborhood of $\cK$. To perform surgery we have to be choose a parallel of $\cK$. Recall that the isotopy class of the parallel of $K$ is equivalent to the preimage of $a$ under $H_1(N - \nu) \to H_1(N)$. 
By \cref{cor:torsion-element} and the fact that $L(a,a) = 2q(a) = 0$, we see that there exists a unique torsion element $\tilde{a}$ lifting $a$.
Let $\ell$ be a parallel representing $\tilde{a}$ and $N_\ell$ be the three-manifold resulting from integral Dehn surgery along $\ell$.
$N_\ell$ has a canonical spin structure by \cref{lem:lem-2} and \cref{lem:surgery-and-spin-structure}.

It remains to check that the topological order associated to $N_\ell$ is indeed the condensed phase.
\begin{proposition}\label{prop:Nprime-is-the-condensed-phase}
    We have 
    \begin{equation}
        H_1(N_\ell)_\tor = \Ann_a(H_1(N))/a
    \end{equation}
    and the quadratic form $q'$ on $N_\ell$ is induced from $q$ on $H_1(N)$.
\end{proposition}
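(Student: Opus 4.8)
The plan is to treat the two assertions separately: the homology statement follows quickly from the diamond diagram \eqref{eq:homology-diamond-diagram} together with \cref{cor:torsion-element}, while the statement about $q'$ is proved by computing both quadratic forms inside the common piece $N-\nu$.

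For the homology statement I would start from the identification $H_1(N_\ell) \cong H_1(N-\nu)/\langle\tilde a\rangle$ supplied by \eqref{eq:homology-diamond-diagram}, where $\tilde a$ is the unique torsion lift of $a$. Since $\tilde a$ is torsion, an element of $H_1(N-\nu)/\langle\tilde a\rangle$ is torsion if and only if it is the image of a torsion element of $H_1(N-\nu)$; hence $H_1(N_\ell)_\tor = H_1(N-\nu)_\tor/\langle\tilde a\rangle$. By \cref{cor:torsion-element} the quotient map restricts to an isomorphism $H_1(N-\nu)_\tor \xrightarrow{\sim} \Ann_a(H_1(N)_\tor)$ carrying $\tilde a$ to $a$, so dividing by $\langle\tilde a\rangle$ yields $H_1(N_\ell)_\tor \cong \Ann_a(H_1(N))/a$.

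For the quadratic form, fix $b \in \Ann_a(H_1(N)_\tor)$ with image $\beta \in H_1(N_\ell)_\tor$, and let $\tilde b \in H_1(N-\nu)_\tor$ be its torsion lift, of order $r = \mathrm{ord}(b)$. I would set up two facts. First, the spin structures on $N$ and on $N_\ell$ restrict to the \emph{same} spin structure on $N-\nu$: by \cref{lem:surgery-and-spin-structure} the trace $M$ carries a spin structure restricting to both, and since the two-handle is attached away from $(N-\nu)\times I$, the restriction to this product piece is constant along $I$, forcing agreement on the two ends. Second, I can represent $\tilde b$ by a knot $\cK_b$ lying in $N-\nu$ whose class there is exactly $\tilde b$, after band-summing a generic representative of $b$ with meridians of $\cK$ (which are null-homologous in $N$ but generate $\langle m\rangle$ in $H_1(N-\nu)$, cf. \cref{lem:surgery-on-homology}) to remove its non-torsion component. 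With this choice $r\cK_b$ already bounds a Seifert surface $\Sigma_0$ entirely inside $N-\nu$, because $r\tilde b = 0$ in $H_1(N-\nu)$. Because the spin structure on a tubular neighborhood of $\cK_b\subseteq N-\nu$ is common to both manifolds, the $\Z_2$-invariant \eqref{bounding-non-bounding-eq} agrees, so a parallel $\ell_b$ with $\sigma(\cK_b,\ell_b)=0$ may be taken identical for $N$ and $N_\ell$ and lies in $N-\nu$. Evaluating \eqref{eq:definition-of-q} in each manifold with this common data and the common order $r$ (valid for $\beta$ since $r\beta=0$) gives
\begin{equation*}
    q(b) = \frac{\Sigma_0 \pitchfork \ell_b}{2r} = q'(\beta),
\end{equation*}
the two transverse counts being computed locally in $N-\nu$ and hence literally equal.

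The main obstacle is the second preparatory point: a naive representative of $b$ need not have torsion class in $H_1(N-\nu)$ (it may differ from $\tilde b$ by a multiple of the non-torsion meridian $m$), and only after correcting its winding does $r\cK_b$ bound a surface in the common region, which is what lets the single surface $\Sigma_0$ serve in both $N$ and $N_\ell$. The accompanying point requiring care is verifying that the two spin structures genuinely agree on $N-\nu$, so that the framing invariant, and therefore the parallel $\ell_b$, transfer unchanged between the two manifolds.
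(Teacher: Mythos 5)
Your proof is correct and follows essentially the same route as the paper: the homology identification is exactly the paper's argument via \cref{lem:surgery-on-homology} and \cref{cor:torsion-element} (with the torsion-of-quotient step spelled out), and your treatment of $q'$ fleshes out what the paper dismisses as ``follows directly from the construction.'' The two points you flag as needing care---correcting the lift of $b$ by meridians so that a Seifert surface for $r\cK_b$ exists inside $N-\nu$, and the agreement of the spin structures on $N-\nu$ via the product structure on $(N-\nu)\times I$ in the trace---are precisely the right details, and both are handled correctly.
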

\begin{proof}
    By \cref{lem:surgery-on-homology}, we see that $H_1(N_\ell) = H_1(N - \nu)/(\tilde{a})$. Furthermore, since $\tilde{a}$ is torsion, we have 
    \begin{equation}
        \begin{aligned}
            H_1(N_\ell)_\tor &= H_1(N - \nu)_\tor/(\tilde{a}) \\ 
            &= \Ann_a(H_1(N)_\tor)/a
        \end{aligned}
    \end{equation}
    where the second equality uses  \cref{cor:torsion-element}.
    The description of $q'$ follows directly from the construction.
\end{proof}

We also have a nice geometric interpretation of the gapped domain wall. Let $M$ be the trace of the surgery. By \cref{lem:surgery-on-homology} we see that $H_1(M)_\tor = \HNtor/a$ indeed corresponds to the anyons at the gapped boundary. The canonical maps 
\begin{equation}
    \begin{tikzcd}
       H_1(N)  \ar[dr] & & H_1(N_\ell)  \ar[dl] \\ 
        & H_1(M)
    \end{tikzcd}
\end{equation}
correspond to taking a bulk anyon (on both sides) to the domain wall.

More generally, given a subgroup $\cA$ of mutual bosons in $\HNtor$, we can iteratively perform this surgery procedure to arrive at a spin-three manifold $N'$ whose associated topological order $\TO_{N'}$ is the phase obtained by condensing $\cA$. 
Note that when $\cA$ is a Lagrangian subgroup, 
then $H_1(N')_\tor$ is $0$ as it corresponds to the trivial topological order.\footnote{In this case $H_1(N')$ is torsion-free but not necessarily zero.}

\begin{example}[Toric code part III]\label{ex:toric-code-4}
    Let $N$ be the spin three-manifold constructed in \cref{ex:toric-code-2} using the Kirby diagram 
    \begin{equation}
        \begin{tikzpicture}
        \begin{knot}[
        flip crossing=1,
        flip crossing=3,
        ]
        \strand[thick] (0,0) ellipse (0.5cm and 1cm);
        \strand[red, thick] (0,0) ellipse (1cm and 0.5cm);
        \node at (-0.7, 0.7) {$0$};
        \node[red] at (0.7, -0.7) {$0$};
        \end{knot}
        \end{tikzpicture}     
    \end{equation}
    Let us describe the Wall surgery associated to the electric boundary $\cA_{e} = (1, e)$ defined in \cref{ex:toric-code-3}. 
    Recall that the $e$ anyon corresponds to the meridian $m$ around the red unknot, with the $0$-framing on $m$ giving the desired parallel class with torsion homology. The resulting three-manifold $N'$ corresponds to the Kirby diagram
    \begin{equation}
        \begin{tikzpicture}
            \begin{knot}[
            flip crossing=1,
            flip crossing=3,
            flip crossing=5,
            ]
            \strand[thick] (0,0) ellipse (0.5cm and 1cm);
            \strand[red, thick] (0,0) ellipse (1.5cm and 0.5cm);
            \strand[blue, thick] (1.5, 0) ellipse(0.5cm and 0.5cm);
            \node at (-0.7, 0.7) {$0$};
            \node[red] at (0.7, -0.7) {$0$};
            \node[blue] at (2.2, 0) {$0$};
            \end{knot}
            \end{tikzpicture}   
    \end{equation}
    Applying the slam dunk move \cite[Figure 5.30]{gompf} to the blue meridian, the Kirby diagram above is equivalent to 
    \begin{equation}
        \begin{tikzpicture}
            \draw[thick] (0,0) circle (0.5);
            \node at (0, 0.7) {$0$};
          \end{tikzpicture}
        \end{equation}
    It follows that $N' = S^1 \times S^3$ with $H_1(N')_\tor = \Z_\tor = 0$. The topological order $\TO_{N'}$ is indeed the trivial phase. The same discussion also holds for the Wall surgery associated to the magnetic Lagrangin subgroup $(1,m)$.
\end{example}

\section{From bordisms to domain walls}
\label{sec:general-bordism}

Let $N$ and $N'$ be spin three-manifolds and $M$ a spin bordism between them together with a Riemannian metric that is cylindrical at the boundaries. In this section we carry out an analysis of homology groups which allows us to assign a domain wall $\cD_M$ to the interface between $\TO_N$ and $\TO_{N'}$. The  domain wall $\cD_M$ is a combination of a gapped domain wall obtained by anyon condensation together with a Narain boundary, subsuming the discussions in \cref{sec:geometry-of-gapless-boundaries} and \cref{sec:geo-of-anyon-cond}.

\subsection{A review on \Poincare duality}
\label{subsec:Poincare}
First we will collect some basic properties of \Poincare--Lefschetz duality for four-manifolds with boundaries.\footnote{
See \cite[\S 3]{brumfiel1973quadratic} for a more geometric approach to our discussion.}
Let $M$ be a compact four-manifold and $N = \partial M$ be its boundary.
Consider the long exact sequence of relative homology groups:
\begin{equation}\label{eq:LES-of-homology}
    \cdots \to H_2(M) \xrightarrow{f} H_2(M, N) \xrightarrow{\partial} H_1(N) \xrightarrow{g} H_1(M) \to \cdots 
\end{equation}
Equip $M$ with an orientation. \Poincare duality gives us the following objects:
\begin{enumerate}
    \item an intersection form 
    \begin{equation}
      \pitchforkrel \colon H_2(M) \otimes H_2(M,N) \to \Z,
    \end{equation}
    \item a symmetric non-degenerate pairing 
    \begin{equation}
        L \colon \HNtor \otimes \HNtor \to \QmodZ,
    \end{equation}
    obtained from the induced orientation on $N$ (defined in \eqref{eq:L-N}), and
    \item a non-degenerate pairing
    \begin{equation}
        L' \colon H_2(M,N)_\tor \otimes H_1(M)_\tor \to \QmodZ
    \end{equation}
    defined similarly to $L$.
        \setcounter{listcounter}{\value{enumi}}
\end{enumerate}
Furthermore, they satisfy the following  conditions:
\begin{enumerate}\label{enum:1}
\setcounter{enumi}{\value{listcounter}}
    \item \label{item-enum:1}
    $\pitchforkrel$ induces a perfect pairing between $H_2(M)^\free$ and $H_2(M,N)^\free$.
    \item The composite 
    \begin{equation}
       \pitchfork \colon H_2(M) \otimes H_2(M) \xrightarrow{\id \otimes f} H_2(M) \otimes H_2(M,N) \xrightarrow{\pitchforkrel} \Z
    \end{equation}
    is a symmetric pairing. Indeed this is the intersection form on closed surfaces in $M$.
    \item Given $a, b \in \HNtor$ and lifts $\tilde{a}, \tilde{b} \in H_2(M,N)$, suppose that $ra = 0$ and hence $r\tilde{a} = f\tilde{a}'$ for some $\tilde{a}' \in H_2(M)$. Then 
    \begin{equation}
        L(a,b) = \frac{\tilde{a}' \pitchforkrel \tilde{b}}{r} \mod \Z.
    \end{equation}
    \item Given $a, b \in \HNtor$ and a lift $\tilde{a} \in H_2(M,N)$ of $a$,
    \begin{equation}\label{eq:L-L'-comp}
        L(a,b) = L'(\tilde{a}, gb).
    \end{equation}
        \setcounter{listcounter2}{\value{enumi}}    
\end{enumerate}
Now we equip $M$ with a spin structure.
Then we have the following:
\begin{enumerate}
\setcounter{enumi}{\value{listcounter2}}
    \item The intersection form $\pitchfork$ is an \emph{even} bilinear form.
    \item There exists a quadratic form $q \colon \HNtor \to \QmodZ$ obtained from the induced spin structure on $N$ (defined in \eqref{eq:definition-of-q}).
    \item Given $a \in \HNtor$ with lift $\tilde{a} \in H_2(M,N)$, suppose that $ra = 0$ and hence $r\tilde{a} = f\tilde{a}'$ for some $\tilde{a}' \in H_2(M)$. Then 
    \begin{equation}\label{eq:q-pitchfork-comp}
        q(a) = \frac{\tilde{a}'\pitchforkrel \tilde{a}}{2r} = \frac{\tilde{a}' \pitchfork \tilde{a}'}{2r^2}.
    \end{equation}
\end{enumerate}

\subsection{The case when $N'$ is empty}\label{subsec:N'-is-empty}
Now we return to constructing $\cD_M$.
Let us first consider the situation that $N' = \varnothing$; in that case, $M$ is a spin four-manifold bounding $N$ and $\cD_M$ will be a boundary of $\TO_N$.

Recall the long exact sequence \eqref{eq:LES-of-homology}.
Let $\cA$ be the image of $H_2(M, N)_\tor$ in $H_1(N)_\tor$ under $\partial$. 
It follows from \eqref{eq:q-pitchfork-comp} that 
$\cA$ is a subgroup of mutual bosons, i.e. $q(a) = 0$ for all $a\in \cA$.
By \cref{subsec:anyon-cond-and-gapped-boundaries}, we can condense the bosons in $\cA$ and get a topological order 
$\TO_{N/\cA}$ whose anyons are given by $\Ann_{\cA}(H_1(N)_\tor)/\cA$. We denote by $\cD_{\cA}$ the gapped domain wall between $\TO_N$ and $\TO_{N/\cA}$.

By \eqref{eq:L-L'-comp}, we have the following lemma:
\begin{lemma}\label{lem:ann-cA}
    The subgroup $\Ann_{\cA}(H_1(N)_\tor)$ is the kernel of 
    $g \colon H_1(N)_\tor \to H_1(M)$.
\end{lemma}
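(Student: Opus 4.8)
The plan is to translate the annihilator condition defining $\Ann_{\cA}(\HNtor)$ into a statement about the boundary map $g$ by means of the comparison formula \eqref{eq:L-L'-comp}, and then to conclude using the non-degeneracy of the pairing $L'$ established in \cref{subsec:Poincare}. The only inputs I need are: the definition $\cA = \partial\bigl(H_2(M,N)_\tor\bigr)$; the identity $L(a,b) = L'(\tilde a, g b)$ valid whenever $\tilde a$ is a lift of $a$; and the fact that $L'$ is a non-degenerate pairing $H_2(M,N)_\tor \otimes H_1(M)_\tor \to \QmodZ$. No four-manifold topology beyond this \Poincare--Lefschetz package should be required.

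First I would record the elementary observation that $g$ sends torsion to torsion: if $b \in \HNtor$ has order $n$, then $n\,g(b) = g(nb) = 0$, so $g(b) \in H_1(M)_\tor$. This is what guarantees that $g b$ lives in the second slot of $L'$, so that the right-hand side of \eqref{eq:L-L'-comp} is meaningful and so that non-degeneracy of $L'$ can legitimately be applied to $g b$.

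Next I would prove the two inclusions. For $\Ker(g) \subseteq \Ann_{\cA}(\HNtor)$, take $b$ with $g b = 0$; for any $a \in \cA$ choose a torsion lift $\tilde a \in H_2(M,N)_\tor$ with $\partial \tilde a = a$ (possible precisely because $\cA$ is the image of $H_2(M,N)_\tor$ under $\partial$), and then \eqref{eq:L-L'-comp} gives $L(a,b) = L'(\tilde a, 0) = 0$, so $b$ annihilates all of $\cA$. For the reverse inclusion, take $b \in \Ann_{\cA}(\HNtor)$; every $\tilde a \in H_2(M,N)_\tor$ is a lift of the element $a \coloneqq \partial \tilde a \in \cA$, whence $L'(\tilde a, g b) = L(a,b) = 0$. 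As $\tilde a$ ranges over all of $H_2(M,N)_\tor$ while $g b$ stays in $H_1(M)_\tor$, non-degeneracy of $L'$ forces $g b = 0$, i.e. $b \in \Ker(g)$.

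The step I would state most carefully\textemdash rather than the genuinely hard part\textemdash is the interplay between the quantifier "for all $a \in \cA$" and "for all torsion lifts $\tilde a$". The key point is that, by the definition $\cA = \partial(H_2(M,N)_\tor)$, the assignment $\tilde a \mapsto \partial \tilde a$ is a surjection $H_2(M,N)_\tor \twoheadrightarrow \cA$, so quantifying over $a \in \cA$ together with an arbitrary torsion lift is the same as quantifying over all $\tilde a \in H_2(M,N)_\tor$. Once this is observed, non-degeneracy of $L'$ applied to the torsion element $g b$ closes the argument, and the lemma follows.
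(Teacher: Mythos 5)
Your argument is correct and is exactly the route the paper intends: the lemma is stated there as an immediate consequence of \eqref{eq:L-L'-comp} together with the non-degeneracy of $L'$, and your two inclusions (using that $\cA = \partial\bigl(H_2(M,N)_\tor\bigr)$ so that quantifying over $a \in \cA$ with torsion lifts is the same as quantifying over $H_2(M,N)_\tor$) spell out precisely that deduction. Your added care about $g$ preserving torsion and about requiring the lift $\tilde a$ to be torsion so that $L'(\tilde a, gb)$ is defined is a welcome tightening of a point the paper leaves implicit.
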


Now we would like to construct a Narain boundary CFT to $\TO_{N/\cA}$. 
To do that we need to find a lattice $\Lambda$ together with an even non-degenerate pairing $K$ whose discriminant group is $\Ann_{\cA}(H_1(N)_\tor)/\cA$.
Note that the even symmetric pairing $\pitchfork$ on $H_2(M)$ is not necessarily non-degenerate. However, $\pitchfork$ induces a non-degerate pairing $\pitchfork'$ on $H'_2(M) \coloneqq H_2(M)/(\ker \, \pitchfork)$. We will take $\Lambda$ to be $H'_2(M)$ and $K$ to be $\pitchfork'$. Furthermore, let 
\begin{equation}
  H'_2(M,N) = \left(\partial^{-1}(\HNtor)\right)^{\free}  = \partial^{-1}(\HNtor)/H_2(M,N)_\tor.
\end{equation}
Now the duality pairing $\pitchforkrel$ between $H_2(M)^\free$ and $H_2(M,N)^\free$ in Item \eqref{item-enum:1} descends to a duality pairing between $H'_2(M)$ and $H'_2(M,N)$. Therefore $H'_2(M,N) \simeq H'_2(M)^*$.
The long exact sequence \eqref{eq:LES-of-homology} induces the following sequence:
    \begin{equation}\label{eq:this-sequence}
    0 \to H'_2(M)\to
    H'_2(M,N)
    \to \Ann_{\cA}(H_1(N)_\tor)/\cA \to 0.
    \end{equation}
\begin{lemma}\label{lem:SES}
   \eqref{eq:this-sequence} is short exact. 
\end{lemma}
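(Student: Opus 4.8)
We want to show that the sequence
\begin{equation*}
    0 \to H'_2(M)\to H'_2(M,N) \to \Ann_{\cA}(H_1(N)_\tor)/\cA \to 0
\end{equation*}
is short exact. The plan is to extract this from the long exact sequence \eqref{eq:LES-of-homology} by carefully tracking what happens when we pass to the free quotients $H'_2(M) = H_2(M)/\ker\pitchfork$ and $H'_2(M,N) = \partial^{-1}(\HNtor)/H_2(M,N)_\tor$, and identify the cokernel.

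**Laying out the proof strategy.** First I would recall the relevant portion of \eqref{eq:LES-of-homology}, namely
\begin{equation*}
    H_2(M) \xrightarrow{f} H_2(M, N) \xrightarrow{\partial} H_1(N) \xrightarrow{g} H_1(M).
\end{equation*}
The middle term $H'_2(M,N)$ is by definition the free part of $\partial^{-1}(\HNtor)$, i.e. the preimage of the torsion subgroup of $H_1(N)$ modulo its own torsion. Restricting $\partial$ to $\partial^{-1}(\HNtor)$ gives a map landing in $\HNtor$ whose image is exactly $\cA$, by the definition of $\cA$. The key point for the right-hand map is to identify the cokernel: I would first show the image of $\partial$ restricted to $H'_2(M,N)$ lands in $\Ann_{\cA}(\HNtor)$. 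By \cref{lem:ann-cA}, $\Ann_{\cA}(\HNtor) = \ker(g)$, and by exactness $\ker(g) = \Im(\partial)$; intersecting with $\HNtor$ identifies the relevant image, and the quotient by $\cA$ (the image of $H_2(M,N)_\tor$) yields precisely $\Ann_{\cA}(\HNtor)/\cA$, giving surjectivity on the right.

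**Handling the left and middle.** For exactness at $H'_2(M)$ (injectivity), I would use Item \eqref{item-enum:1}: $\pitchforkrel$ induces a perfect pairing between $H_2(M)^\free$ and $H_2(M,N)^\free$, so the descended map $H'_2(M) \to H'_2(M,N) \simeq H'_2(M)^*$ is the adjoint of the nondegenerate form $\pitchfork'$, hence injective. For exactness in the middle, I would chase the definitions: an element of $H'_2(M,N)$ maps to $0$ in $\Ann_{\cA}(\HNtor)/\cA$ exactly when its $\partial$-image lies in $\cA = \partial(H_2(M,N)_\tor)$, and combined with the original exactness $\Im(f) = \ker(\partial)$ this should force it to come from $H'_2(M)$; the subtlety is that we are working modulo torsion on both sides, so I must verify that lifting back through $f$ can be done at the level of free quotients.

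**Main obstacle.** The hardest part will be the bookkeeping around torsion, specifically verifying exactness in the middle once we quotient by torsion subgroups. The long exact sequence is exact on the nose, but passing to $H'_2(M)$ and $H'_2(M,N)$ discards torsion, and I must check this does not create or destroy kernel/image relations — e.g. that no nonzero free class in $H'_2(M,N)$ that maps into $\cA$ fails to lift to $H'_2(M)$, and that the map $H'_2(M)\to H'_2(M,N)$ really does have image equal to the kernel of the surjection onto $\Ann_{\cA}(\HNtor)/\cA$. I expect this to be manageable by combining the perfectness of $\pitchforkrel$ on free parts (Item \eqref{item-enum:1}) with a diagram chase, using that $H'_2(M,N)\simeq H'_2(M)^*$ to pin down ranks and thereby confirm the cokernel has the right order, namely $|\Ann_{\cA}(\HNtor)/\cA|$.
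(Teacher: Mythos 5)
Your proposal is correct and follows essentially the same route as the paper's proof: injectivity of the first map via the perfect pairing $\pitchforkrel$ on free parts (so that $H'_2(M)\to H'_2(M,N)\simeq H'_2(M)^*$ is the adjoint of the non-degenerate $\pitchfork'$), surjectivity via \cref{lem:ann-cA} combined with $\ker g = \Im\partial$, and middle exactness from exactness of \eqref{eq:LES-of-homology} at $H_2(M,N)$. The torsion bookkeeping you flag as the main obstacle does go through exactly as you sketch (if $\partial u\in\cA$, subtract a torsion class $t$ with $\partial t=\partial u$ and lift $u-t$ through $f$; note $\ker\pitchfork = f^{-1}(H_2(M,N)_\tor)$ so nothing is lost in the free quotients), and the final rank count is not needed.
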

\begin{proof}
    The injectivity of the first map is clear as it is the adjoint to the non-degenerate pairing $\pitchfork'$.
    The second map is surjective by \cref{lem:ann-cA}. The exactness follows from the exactness of \eqref{eq:LES-of-homology} at $H_2(M,N)$.
\end{proof}
By \cref{subsec:Hodge-theory} the Riemannian metric on $M$ defines a polarization $m$ on  $H'_2(M) \otimes \R \simeq H'^2(M; \R)$. 
Following \cref{subsec:four-manifold-to-Narain}, we have a Narain boundary 
$\cB_{H'_2(M), \pitchfork', m}$ to $\TO_{N/\cA}$. 
At last, we define the boundary $\cD_M$
to be the composite
\begin{equation}\label{eq:def-of-cD}
    \cD_M \coloneqq 
    {\TO_{N}} \xrightarrow{\cD_{\cA}} \TO_{N/\cA} \xrightarrow{\cB_{H'_2(M), \pitchfork', m}} \mathbbm{1},
\end{equation}
where $\mathbbm{1}$ is the trivial phase.

The twisted partition function of $\cD_M$ can be written as follows:
\begin{equation}\label{eq:twisted-partition-function}
    Z_{\cD_M}^a(\tau) = |\cA|\, \delta_{a \in \Ann_{\cA}(H_1(N)_\tor)} \,
    \sum_{[\tilde{a}] \mapsto [a]}
    e^{\pi i y H([\tilde{a}]) + \pi i x K([\tilde{a}])}.
\end{equation}
Here $[a]$ represents the equivalence class of $a$ in $\Ann_{\cA}(H_1(N)_\tor)/\cA$ and $[\tilde{a}]$ sums over elements in $H'_2(M,N)$ that map to $[a]$. 
Note that $H$ and $K$ are defined using the  $\pitchfork'$ and  $m$ (see \eqref{eq:def-of-K} and \eqref{eq:def-of-H}).

It is clear that \eqref{eq:twisted-partition-function} is a mix of anyon condensation and Narain boundary CFT:
the factor $$|\cA|\, \delta_{a \in \Ann_{\cA}(H_1(N)_\tor)}$$ comes from the gapped domain wall $\cD_{\cA}$ between $\TO_N$ and $\TO_{N/\cA}$, while $$e^{\pi i y H([\tilde{a}]) + \pi i x K([\tilde{a}])}$$ 
is the partition function of the Narain boundary $\cB_{H'_2(M), \pitchfork', m}$ for $\TO_{N/\cA}$.

\subsection{The general case of bordisms between spin three-manifolds}\label{subsec:folding-trick}

Now we return to the general case where $M$ is a bordism from $N$ to some non-empty $N'$. We will construct $\cD_M$ as a domain wall between $\TO_N$ and $\TO_{N'}$. On the topological side, we can view $M$ as a bordism from $\partial M = N \sqcup -N'$ to $\varnothing$, where $-N'$ is $N'$ equipped with the opposite spin structure. Using the construction above we get a boundary for 
\begin{equation}
    \TO_{N \sqcup -N'} = \TO_{N} \boxtimes \TO_{-N'} = \TO_{N} \boxtimes \overline{\TO_{N'}}.
\end{equation}
Here $\boxtimes$ is the \emph{Deligne tensor product} and the conjugate $\overline{\TO_{N'}}$ is the parity-reversed topological order. Note that in the abelian case, $(D_1, q_1) \boxtimes (D_2, q_2) = (D_1 \oplus D_2, q_1 \oplus q_2)$ and $\overline{(D, q)} = (D, -q)$ (see \cite{belov2005classification}).
Now we use the folding trick (\cite[\S 7]{kapustin2011topological}) in the theory of topological orders, which says that a domain wall between two topological orders $\cA$ and $\cB$ is equivalent to a boundary for $\cA \boxtimes \overline{\cB}$.\footnote{Mathematically this follows from the dualizability of modular tensor categories.} 
Finally, we see that the boundary $\cD_M$ of 
$\TO_{N} \boxtimes \overline{\TO_{N'}}$ unfolds to a domain wall $\TO_{N}$ and $\TO_{N'}$.

We conclude this section by noting that the construction $M \mapsto \cD_M$ is \emph{not} functorial. The reader can readily verify this by considering the composition of the bordism $D^2 \times S^2 \colon \varnothing \to S^1 \times S^2$ with its opposite.
The fundamental reason for this is because we only consider elements in $H_2(M, N)$ that map to torsion elements in $H_1(M)$, and this restriction is not preserved by the composition of bordisms. 
In a forthcoming paper we provide a functorial construction in the setting of twisted supersymmetric theories.

\section{Outlook}\label{sec:outlook}

In this paper we investigated the correspondence between spin three-manifolds and bosonic abelian topological orders. 
Interestingly, it seems like there is a similar correspondence between $\Spinc$ three-manifolds and \emph{fermionic} topological orders. The starting point of this correspondence is the observation that a $\Spinc$ structure $\sigma^c$ on a three-manifold $N$ induces a quadratic function $q^c$ on $\HNtor$ \cite{deloup}, generalizing the quadratic form associated to a spin structure. On the other hand, the pair $(\HNtor, q^c)$ defines a fermionic topological order \cite{belov2005classification}. We expect much of the correspondence generalizes to the $\Spinc$ case. For example, every Kirby diagram should give a fermionic toroidal Chern--Simons theory, and every $\Spinc$ bordism should give a domain wall between the fermionic topological orders.

This correspondence between spin three-manifolds and three-dimensional topological orders also fits nicely into the paradigm of 3D-3D correspondences 
\cite{MR3148093, gadde2016fivebranes, MR3584441,gukov2017fivebranes}. In particular, the topological orders $\TO_N$ should be the compactification of the self-dual 6D theory  $Z^{6d}$ \cite{verlinde1995global} on the three-manifold $N$. 
This would naturally explain why bordisms give rise to domain walls, for instance. 
We look forward to extending our explicit descriptions to the non-abelian case, and relating our results to the literature on 3D-3D correspondences, in particular the aspects relating more to topological order \cite{MR4541920,bonetti2024mtc}.

\bibliographystyle{amsalpha}
\bibliography{bib.bib}

\end{document}